  \def\\{}%
\newcommand{\FO}{\textup{FO}\xspace}
\newcommand{\FOord}{\textup{FO(<)}\xspace}
\newcommand{\FOordinv}{\textup{FO(<)}\ensuremath{_\textup{inv}}\xspace}
\newcommand{\SO}{\textup{SO}\xspace}
\newcommand{\ESO}{\textup{$\exists$SO}\xspace}
\newcommand{\USO}{\textup{$\forall$SO}\xspace}
\newcommand{\GFO}{\textrm{\upshape GFO}\xspace}
\newcommand{\GNFO}{\textrm{\upshape GNFO}\xspace}
\newcommand{\FOtwo}{\textrm{\upshape FO$^2$}\xspace}
\newcommand{\ML}{\textup{ML}\xspace}
\newcommand{\UNFO}{\textup{UNFO}\xspace}
\newcommand{\Ctwo}{\textup{C}$_2$\xspace}
\newcommand{\FL}{\textup{FL}\xspace}
\newcommand{\FF}{\textup{FF}\xspace}
\newcommand{\AF}{\textup{AF}\xspace}
\newcommand{\GFOtwo}{\textrm{\upshape GFO}$_2$\xspace}
\newcommand{\GFF}{\textrm{\upshape GFF}\xspace}
\newcommand{\GFL}{\textrm{\upshape GFL}\xspace}
\newcommand{\MLD}{\textrm{\upshape ML}$_D$\xspace}
\newcommand{\PA}{\textup{PA}\xspace}
\newcommand{\PrA}{\textup{PrA}\xspace}
\newcommand{\arity}{\textrm{arity}}
\newcommand{\relsig}{\mathop{\textup{relsig}}}
\newcommand{\sig}{\mathop{\textup{sig}}}
\newcommand{\fvar}{\mathop{\textup{fvar}}}
\newcommand{\acc}{S}
\newcommand{\BindPatt}{\mathop{\textup{BindPatt}}}
\newcommand{\leaf}{\textup{leaf}}
\newcommand{\firstchild}{\textup{firstchild}}
\newcommand{\secondchild}{\textup{secondchild}}
\newmdenv[
  backgroundcolor=lightgray!50,
  linecolor=gray,
  linewidth=1pt,
  skipabove=10pt,
  skipbelow=10pt,
  innerleftmargin=8pt,
  innerrightmargin=8pt,
  innertopmargin=5pt,
  innerbottommargin=5pt
]{boxtext}
\newenvironment{excursus}{\begin{boxtext}\textbf{Excursus.}\xspace\xspace}{\end{boxtext}}
\definecolor{darkgreen}{rgb}{0.0, 0.4, 0.13}
\newtheorem{question}{Question}
\author{Balder ten Cate}{University of Amsterdam}{b.d.tencate@uva.nl}{https://orcid.org/0000-0002-2538-5846}{Supported by the European Union’s Horizon 2020 research and innovation programme (MSCA-101031081).}
\author{Jesse Comer}{University of Pennsylvania}{jacomer@seas.upenn.edu}{https://orcid.org/0009-0006-9734-3457}{Acks}
\begin{document}

\title{Interpolation in First-Order Logic}

\maketitle

\begin{abstract}
In this chapter we give a basic overview of known results regarding Craig interpolation for first-order logic as well as for fragments of first-order logic. Our aim is to provide an entry point into the literature on interpolation theorems for first-order logic and fragments of first-order logic, and their applications. In particular, we cover a range of known refinements of the Craig interpolation theorem,
we discuss several important applications of interpolation in logic and computer science, we review known results about interpolation for important syntactic fragments of first-order logic, and we discuss the problem of computing interpolants.
\end{abstract}

\tableofcontents

\section{Introduction}
\label{sec:introduction}

In this chapter we give a basic overview of known results regarding Craig interpolation for first-order logic and many of its well-known fragments. Our aim is to provide an easy entry point into the literature. We focus on providing a global picture, intuition through examples, and pointers to further chapters in this book as well as other related literature. Other
useful articles on this topic include  \cite{DBLP:journals/synthese/Feferman08}, which provides a survey of the history and refinements of the theorem, \cite{vanbenthem2008}, which outlines important perspectives on the theorem and its significance, and \cite{vaananen2008}, describing the significance of the theorem in the study of abstract logics.

In Section \ref{sec:fo}, we recall the Craig interpolation theorem for first-order logic (\FO), as well as its variants. In Section \ref{sec:app}, we briefly survey some applications of the theorem, with references to other chapters of this book for additional details. In Section \ref{sec:frag}, we describe what is known about the Craig interpolation property among fragments of \FO. Section \ref{sec:theories} discusses interpolation in the context of background theories. Finally, Section \ref{sec:techniques} discusses how the Craig interpolation property can be proven, and how proof theory permits the effective construction of interpolants.
\section{Interpolation Theorems for First-Order Logic}
\label{sec:fo}

In this section, we review several prominent interpolation theorems that have been established for \FO. We first briefly recall the syntax of first-order logic in order to fix our notation. A \emph{signature} is a collection of relation, constant, and function symbols, where each relation symbol $R$ and each function symbol $f$ has a specified arity (denoted $\arity(R)$ or $\arity(f)$). We assume first-order formulas are generated by the following grammar:
\[
\phi ::= R(t_1, \ldots, t_n) \mid t_1=t_2 \mid \top \mid \phi \land \psi \mid \neg \phi \mid \exists x \phi
\]
where $R$ is an $n$-ary relation symbol, and $t_1, \ldots, t_n$ are terms built up from variables and constant symbols using function symbols. We treat disjunction, implication, and universal quantification as shorthand notations. We write $\sig(\phi)$ to denote the set of relation symbols, constants, and function symbols occurring in $\phi$. We write $\fvar(\phi)$ to denote the set of free variables of $\phi$. A \emph{sentence} is a formula without free variables. 

All results stated in this section hold for first-order logic both with and without equality. In the case without equality, it is important for the interpolation theorems that $\top$ is explicitly included in the syntax. In the case with equality, $\top$ could be treated as a shorthand for $\forall x(x=x)$. We defer discussion of techniques used to prove the interpolation theorems to Section~\ref{sec:techniques}.

\subsection{Craig Interpolation}

We begin with the Craig interpolation theorem (1957~\cite{craig1957three, craig1957linear}).

\begin{definition}[Craig interpolant]
Let $\phi,\psi$ be \FO formulas such that $\models \phi\to\psi$.
A \emph{Craig interpolant} for $\models \phi\to\psi$
is an \FO-formula $\vartheta$ such that
    the following hold:
    \begin{enumerate}
        \item $\models\phi\to\vartheta$,
        \item $\models\vartheta\to\psi$,
        \item $\sig(\vartheta)\subseteq \sig(\phi)\cap \sig(\psi)$,
        \item $\fvar(\vartheta)\subseteq \fvar(\phi)\cap \fvar(\psi)$.
    \end{enumerate}
\end{definition}

\begin{theorem}[Craig interpolation~\cite{craig1957three, craig1957linear}]
    Every valid \FO implication has a Craig interpolant.
\end{theorem}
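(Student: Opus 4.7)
The plan is to prove the theorem by contradiction using the model-theoretic approach based on compactness and an amalgamation theorem (Robinson's joint consistency theorem). Craig's original 1957 proof was proof-theoretic and went via a cut-elimination style analysis of derivations; I would prefer the semantic route here because it isolates the combinatorics of the shared signature from any particular proof system.

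First I would reduce to the case of sentences by treating each free variable occurring in $\phi$ or $\psi$ as a fresh constant symbol, placing such a constant into the common signature exactly when the corresponding variable lies in $\fvar(\phi)\cap \fvar(\psi)$. Set $\Sigma = \sig(\phi) \cap \sig(\psi)$ and let $\mathcal{T} = \{\theta : \sig(\theta) \subseteq \Sigma,\ \models \theta \to \psi\}$ be the set of $\Sigma$-sentences that imply $\psi$. Assuming no interpolant exists, I claim the set $\{\phi\} \cup \{\neg\theta : \theta \in \mathcal{T}\}$ is satisfiable: otherwise compactness yields $\theta_1, \ldots, \theta_n \in \mathcal{T}$ with $\models \phi \to (\theta_1 \vee \cdots \vee \theta_n)$, and then this disjunction would itself be a Craig interpolant. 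Fix a model $\mathcal{M}$ of $\{\phi\} \cup \{\neg\theta : \theta \in \mathcal{T}\}$.

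Next, let $T$ be the full $\Sigma$-theory of $\mathcal{M}$. I would argue, by a dual compactness argument, that $T \cup \{\neg \psi\}$ is satisfiable: otherwise some finite conjunction of $\Sigma$-sentences from $T$ would entail $\psi$, placing that conjunction in $\mathcal{T}$ and contradicting the choice of $\mathcal{M}$. Fix a model $\mathcal{N}$ of $T \cup \{\neg\psi\}$. Now $\mathcal{M}$ and $\mathcal{N}$ are elementarily equivalent as $\Sigma$-structures. The main obstacle, and the technical core of the argument, is to amalgamate them into a single structure $\mathcal{A}$ over $\sig(\phi) \cup \sig(\psi)$ that elementarily extends isomorphic copies of $\mathcal{M}$ over $\sig(\phi)$ and of $\mathcal{N}$ over $\sig(\psi)$. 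The standard route is Robinson's joint consistency theorem, which one in turn proves by a Henkin-style construction alternating witnesses between the two signatures, or alternatively via iterated ultrapowers producing sufficiently saturated elementary extensions of $\mathcal{M}$ and $\mathcal{N}$ whose $\Sigma$-reducts become isomorphic via a back-and-forth. Once such $\mathcal{A}$ is produced, $\mathcal{A} \models \phi \wedge \neg\psi$, contradicting the assumed validity of $\phi \to \psi$ and completing the proof.
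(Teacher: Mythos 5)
Your proposal is correct, but it takes a genuinely different route from the one this chapter gives. The paper's own proof (Section~\ref{sec:techniques}) is proof-theoretic: it takes a closed semantic tableau for $\{\phi,\neg\psi\}$ (guaranteed by tableau completeness), labels every formula on the tableau with $L$ or $R$ according to whether it descends from the left or right input, and constructs an interpolant by induction on the tableau, propagating interpolants upward through the expansion and closure rules (Figure~\ref{fig:propagation-rules}); for simplicity it restricts to equality-free sentences over function-free signatures. You instead give the classical model-theoretic argument: reduce to sentences by trading shared free variables for shared constants, use compactness twice to produce $\mathcal{M}\models\phi$ and $\mathcal{N}\models\neg\psi$ that agree on their full theory in the common signature, and then amalgamate them over that signature to contradict $\models\phi\to\psi$. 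Both arguments are sound; yours handles equality and function symbols uniformly and isolates the signature combinatorics from any proof calculus, but it is inherently non-constructive, whereas the tableau proof yields an effective interpolant from a given refutation and adapts directly to the refinements discussed in Section~\ref{sec:fo} (Lyndon, Otto, access interpolation). One caution: the paper itself derives Robinson joint consistency \emph{from} Craig interpolation, so quoting Robinson's theorem as a black box would be circular in this context; you do address this by sketching independent proofs of the amalgamation step (Henkin-style construction alternating witnesses, or sufficiently saturated elementary extensions with a back-and-forth on the reducts), and a complete write-up would need to carry out one of these in detail, since that step is indeed the technical core.
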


\begin{example}\label{ex:craig}
Consider the formulas $\phi = \exists x Cat(x)\land \forall x(Cat(x)\to (Big(x)\land Green(x)))$ and $\psi = \exists x(Big(x)\land (Cat(x)\lor Dog(x)))$. Then
\[
\models \phi\to\psi
\]
In this case $\sig(\phi)=\{Cat,Big,Green\}$ and $\sig(\psi)=\{Cat,Dog,Big\}$. Therefore, a Craig interpolant for this valid entailment may only contain the relation symbols $Cat$ and $Big$. Examples of Craig interpolants for this implication include:
\begin{itemize}
    \item $\vartheta_1 = \exists x(Big(x)\land Cat(x))$, and
    \item $\vartheta_2 = \exists x (Cat(x))\land \forall x(Cat(x)\to Big(x))$
\end{itemize}
This also shows that Craig interpolants are not unique in general.
\lipicsEnd\end{example}

\subsection{Variants of the Interpolation Theorem}

The \emph{Lyndon interpolation theorem} refines the Craig interpolation theorem by distinguishing positive occurrences and negative occurrences of relation symbols. We write $\relsig^+(\phi)$ and $\relsig^-(\phi)$ to denote the set of relation symbols occurring \emph{positively}, respectively, \emph{negatively}, in $\phi$. Here, an occurrence of a relation symbol is positive if it is in the scope of an even number of negation symbols, and negative if it is in the scope of an odd number of negation symbols; we refer these terms as the \emph{polarity} of the occurrence.

\begin{definition}[Lyndon interpolant]
A \emph{Lyndon interpolant} for $\models\phi\to\psi$ is a Craig interpolant $\vartheta$ such that, in addition, the following hold:
\begin{enumerate}
\item $\relsig^+(\vartheta)\subseteq \relsig^+(\phi)\cap \relsig^+(\psi)$,
\item $\relsig^-(\vartheta)\subseteq \relsig^-(\phi)\cap \relsig^-(\psi)$,   
\end{enumerate}
\end{definition}

\begin{theorem}[Lyndon interpolation~\cite{Lyndon1959}]
Every valid \FO implication has a Lyndon interpolant.
\end{theorem}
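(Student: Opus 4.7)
The plan is to prove Lyndon's theorem by a proof-theoretic argument that refines the Maehara-style induction yielding Craig interpolation. By Gentzen's completeness and cut-elimination theorems for \FO, any valid implication $\models \phi \to \psi$ is witnessed by a cut-free sequent derivation of $\phi \Rightarrow \psi$. The crucial observation is that in a cut-free derivation the polarity of every atomic-subformula occurrence is transparently determined by the rule history, so polarity can be tracked throughout an induction on proofs without any genuine new work beyond the Craig setting.

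Concretely, I would prove by induction on cut-free derivations the following partitioned interpolation lemma: for every derivation of a sequent $\Gamma \Rightarrow \Delta$ and every partition of its formulas into a left half $\Gamma_L \Rightarrow \Delta_L$ and a right half $\Gamma_R \Rightarrow \Delta_R$, there is an \FO-formula $\vartheta$ such that both $\Gamma_L \Rightarrow \Delta_L, \vartheta$ and $\vartheta, \Gamma_R \Rightarrow \Delta_R$ are derivable, whose constants, function symbols, and free variables appear in both halves, and which additionally satisfies
\[
    \relsig^+(\vartheta) \subseteq P_L \cap P_R \quad\text{and}\quad \relsig^-(\vartheta) \subseteq N_L \cap N_R,
\]
where $P_L$ denotes the relation symbols occurring positively in $\Delta_L$ or negatively in $\Gamma_L$, and $N_L, P_R, N_R$ are defined analogously. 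Instantiating this lemma on the derivation of $\phi \Rightarrow \psi$ with $\Gamma_L = \{\phi\}$, $\Delta_R = \{\psi\}$, and the remaining partition parts empty, produces a Lyndon interpolant.

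The induction proceeds by cases on the last rule applied. The base case is an atomic axiom $A \Rightarrow A$: according to which partition half each occurrence of $A$ falls into, the interpolant is chosen to be $\top$, $\bot$, $A$, or $\neg A$, and in each subcase one checks that this choice respects the Lyndon polarity constraints on both halves of the partition. The inductive cases combine the interpolants from the premises using $\land$, $\lor$, $\exists x$, or $\forall x$ dictated by the final rule. The key syntactic point is that wrapping a formula in any of these constructors preserves the polarity of every relation-symbol occurrence inside it, so the polarity discipline propagates automatically. The main obstacle is the careful setup of the polarity-sensitive partition and the axiom-case analysis: one must argue, for each of the four axiom subcases, that the chosen interpolant's sets of positively and negatively occurring relation symbols are contained in what \emph{both} sides of the partition admit. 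Once this bookkeeping is nailed down, the inductive step is essentially a polarity-preserving rephrasing of the usual Craig--Maehara argument.
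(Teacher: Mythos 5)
Your overall route (a cut-free sequent derivation plus a Maehara-style partitioned induction with polarity tracking) is the standard proof-theoretic one and is essentially the approach the paper gestures at: its tableau proof of Craig interpolation, which it says extends to Lyndon, is a cut-free sequent argument read upside down. However, your key lemma is mis-stated, and the error sits exactly in the bookkeeping you yourself flag as the main obstacle. You require $\relsig^+(\vartheta)\subseteq P_L\cap P_R$ with $P_L$ the relation symbols occurring positively in $\Delta_L$ or negatively in $\Gamma_L$. Instantiated at the root with $\Gamma_L=\{\phi\}$ and $\Delta_R=\{\psi\}$, this yields $\relsig^+(\vartheta)\subseteq\relsig^-(\phi)\cap\relsig^+(\psi)$ rather than the Lyndon condition $\relsig^+(\vartheta)\subseteq\relsig^+(\phi)\cap\relsig^+(\psi)$. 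Worse, the lemma as stated is false: for the derivable sequent $P\Rightarrow P\lor Q$ your definitions give $P_L\cap P_R=N_L\cap N_R=\emptyset$, so the interpolant could contain no relation symbols at all, and no such interpolant exists. Your own axiom analysis already conflicts with the stated invariant: when the atom $A$ lies in $\Gamma_L$ and its copy in $\Delta_R$ you choose $\vartheta=A$, but then $\relsig^+(A)\not\subseteq P_L$, since $A$ occurs positively, not negatively, in $\Gamma_L$.

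The repair is to make the invariant asymmetric, reflecting that $\vartheta$ stands in the succedent of the left sequent $\Gamma_L\Rightarrow\Delta_L,\vartheta$ but in the antecedent of the right sequent $\vartheta,\Gamma_R\Rightarrow\Delta_R$: measure the left block as the formula $\bigwedge\Gamma_L\land\neg\bigvee\Delta_L$ and the right block as $\bigwedge\Gamma_R\to\bigvee\Delta_R$. Concretely, $P_L$ should be the symbols occurring positively in $\Gamma_L$ or negatively in $\Delta_L$ (with $N_L$ dual), while $P_R$ stays as you wrote it. With this correction the four axiom subcases ($\bot$, $\top$, $A$, $\neg A$) do satisfy the constraints, the propositional steps are polarity-preserving as you say, and in the quantifier steps the remaining care is that the choice between prefixing $\exists x$ or $\forall x$ to the interpolant depends on which side the principal formula and the eigenvariable or witnessing constant belong to, not on the rule alone. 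Finally, note that the theorem in the paper is stated for \FO with equality and function symbols, which your sequent argument (like the paper's tableau proof) does not cover without additional machinery; that simplification is acceptable here but should be acknowledged.
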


Thus, the Lyndon interpolation theorem refines the Craig interpolation theorem by interpolating jointly over positive and negative occurrences of relation symbols. It is worth noting that, for \emph{constants and
function symbols}, no distinction is made between 
positive and negative occurrences.

\begin{example}\label{ex:lyndon}
Let us revisit Example~\ref{ex:craig}. In this case, $\relsig^+(\phi)=\{Cat, Big, Green\}$, $\relsig^-(\phi)=\{Cat\}$, $\relsig^+(\psi)=\{Big,Cat,Dog\}$ and $\relsig^-(\psi)=\emptyset$. It follows that a Lyndon interpolant for $\models\phi\to\psi$ may only contain positive occurrences of $Cat$ and $Big$ and no negative occurrences. In Example~\ref{ex:craig}, we listed two Craig interpolants. The first of these is also a Lyndon interpolant, while the second is not.
\lipicsEnd\end{example}

\begin{example}
    This example illustrates why relation symbols are treated differently from constants and function symbols in the Lyndon interpolation theorem.
Let $\phi = P(c)$ and let $\psi = \neg\exists x(x=c\land \neg P(x))$.
Clearly $\models\phi\to\psi$ (indeed, $\phi$ and $\psi$ are logically equivalent). The constant symbol $c$ occurs only positively (i.e., under an even number of negations) in $\phi$ and occurs only negatively (i.e., under an odd number of negations) in $\psi$. It is easy to see that there does not exist a Craig interpolant for $\models\phi\to\psi$ that does not contain an occurrence of $c$. In other words, if we were to require that the interpolant contains only constants that occur in $\phi$ and $\psi$ with the same polarity, then such an interpolation does not in general exist for a valid \FO implication. The same example can be easily adjusted to use a function symbol instead of a constant.
\lipicsEnd\end{example}

Several further refinements were subsequently obtained. 
For instance, Feferman \cite{Feferman1968:lectures} and Stern~\cite{Stern1975} proved interpolation theorems in the setting of first-order formulas over a many-sorted signature, which also interpolates over sorts. We omit the details here since it would require introducing many-sorted signatures, and also because Feferman's and Stern's interpolation theorems are subsumed by an even more powerful interpolation theorem that can be stated without reference to many-sorted signatures -- namely, the Otto interpolation theorem. For a set of unary predicates $\mathcal{U}$, we say that an \FO-formula $\phi$ is \emph{$\mathcal{U}$-relativized} if every quantifier in $\phi$ is relativized, i.e., is of the form $\exists x(U(x)\land\cdots)$ or $\forall x(U(x)\to \cdots)$, for some $U \in \mathcal{U}$.

\begin{theorem}[Otto interpolation~\cite{Otto2000:interpolation}]
\label{thm:otto}
For every pair of \FO formulas $\phi,\psi$ such that $\models \phi\to\psi$, and for all sets $\mathcal{U}$ of unary predicates, if $\phi$ and $\psi$ are both $\mathcal{U}$-relativized, then there is a Lyndon interpolant $\vartheta$ that is also $\mathcal{U}$-relativized.
\end{theorem}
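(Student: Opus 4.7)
The plan is to derive Otto's theorem from Lyndon's interpolation theorem by passing through a many-sorted reformulation: a $\mathcal{U}$-relativized formula is, intuitively, a many-sorted formula in disguise, with each $U \in \mathcal{U}$ playing the role of a sort.

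I would first introduce a many-sorted signature with a sort $s_U$ for every $U \in \mathcal{U}$, together with a translation $\chi \mapsto \chi^*$ that sends each $\mathcal{U}$-relativized quantifier $\exists x(U(x) \land \chi)$ to a sorted quantifier $\exists x{:}s_U\, \chi^*$, and symmetrically for $\forall$. Relation and function symbols get annotated with sort assignments for their arguments, and each constant is assigned a top sort. A routine semantic correspondence between single-sorted structures (with designated interpretations of the predicates in $\mathcal{U}$) and many-sorted structures then shows that $\models \phi \to \psi$ iff $\models \phi^* \to \psi^*$.

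Next I would establish a many-sorted Lyndon interpolation theorem: every valid many-sorted implication has a Lyndon interpolant whose signature \emph{and} whose sorts are contained in those common to the two sides, respecting polarity. This is essentially a bookkeeping adaptation of a standard proof of Lyndon interpolation (e.g., via a Henkin construction or via Robinson joint consistency), where the model-building step is carried out sort by sort and where the positive/negative side conditions are tracked through the construction. Applying this theorem to $\phi^* \to \psi^*$ produces a many-sorted Lyndon interpolant $\vartheta^*$, and the inverse translation — turning each $\exists x{:}s_U$ back into $\exists x(U(x) \land \cdots)$ and each $\forall x{:}s_U$ back into $\forall x(U(x) \to \cdots)$ — gives a single-sorted $\mathcal{U}$-relativized Lyndon interpolant $\vartheta$.

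The main obstacle lies in the bookkeeping. Constants and function symbols may need multiple sorted versions depending on the sorts of their arguments, and one must verify that each $U \in \mathcal{U}$ introduced by the back-translation receives the correct polarity — positive when it comes from an existential sorted quantifier, negative when it comes from a universal one — in a way that is compatible with the Lyndon polarity constraints on $\phi^*$ and $\psi^*$ inherited from those on $\phi$ and $\psi$. Because each $U \in \mathcal{U}$ in $\phi$ already appears positively wherever it witnesses an existential relativization and negatively wherever it witnesses a universal relativization (and similarly for $\psi$), I expect this compatibility to go through, but it is the step that requires the most care.
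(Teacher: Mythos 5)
Your reduction breaks at the very first step, the claim that a ``routine semantic correspondence'' gives $\models\phi\to\psi$ iff $\models\phi^*\to\psi^*$. The translation absorbs each guarding occurrence of $U\in\mathcal{U}$ into the sort $s_U$, but the definition of $\mathcal{U}$-relativized only constrains the quantifiers: the predicates in $\mathcal{U}$ may also occur as ordinary (non-guarding) atoms, and many-sorted logic cannot express that membership in the sort $s_U$ coincides with satisfaction of the atom $U$. This severed link is fatal, and Example~\ref{ex:otto-optimal} is a concrete counterexample: $\models\forall x(Px\to\neg Qx)\to\forall x(Qx\to\neg Px)$ is valid and both sides are $\{P,Q\}$-relativized, but the translated implication $\forall x{:}s_P\,\neg Q(x)\;\to\;\forall x{:}s_Q\,\neg P(x)$ is \emph{not} valid in the many-sorted setting: interpret $s_P$ by any set on which $Q$ fails and $s_Q$ by a singleton $\{a\}$ with $P(a)$ true. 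No bridging axioms repair this, since ``every element satisfying $P$ is of sort $s_P$'' is not expressible. Your reduction is sound only when the $\mathcal{U}$-predicates occur exclusively as relativizers, which is essentially the many-sorted setting of Feferman and Stern --- and, as remarked just before Theorem~\ref{thm:otto}, those interpolation theorems are \emph{subsumed} by Otto's, which is strictly stronger; so deriving Otto interpolation from many-sorted interpolation runs the reduction in the wrong direction.

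A secondary gap, which you flag yourself but underestimate, is the polarity bookkeeping: the back-translation turns an existential sorted quantifier over $s_U$ into a positive occurrence of $U$ and a universal one into a negative occurrence, so you would need a many-sorted interpolation theorem that tracks not just which sorts occur in the interpolant but the polarity with which each sort may be quantified, relative to how $U$ occurs in $\phi$ and $\psi$ (in the example following Theorem~\ref{thm:otto}, the interpolant may use $Q$ only positively, hence only as an existential relativizer). Standard many-sorted Lyndon-style interpolation provides no such refinement, so even in the atom-free-guard case more than ``bookkeeping'' is required. For reference, the chapter does not prove Theorem~\ref{thm:otto} by such a reduction: it cites Otto's model-theoretic argument, observes that over relational signatures the theorem follows from the access interpolation theorem (Theorem~\ref{thm:access-interpolation}) via the encoding of $\mathcal{U}$-relativization by access methods, and notes that the tableau construction of Section~\ref{sec:techniques} can be instrumented to preserve relativization; any of these routes avoids the sort/predicate identification your translation relies on.
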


Observe that a formula is $\emptyset$-relativized if and only if it is quantifier-free. Therefore, as a consequence of Otto interpolation, every valid quantifier-free entailment has a quantifier-free Lyndon interpolant.      This also implies Lyndon interpolation for propositional logic, although it is a very inefficient route to obtain the latter.

\begin{example}
The following implication is valid:
\[
\models (\exists x (Px \land Rxx)\land \forall x(Px \to Qx))\to (\exists x(Qx\land\neg Tx)\lor \exists x(Tx\land Rxx).
\]
Both formulas are $\mathcal{U}$-relativized with $\mathcal{U}=\{P,Q,T\}$. By the Otto interpolation theorem, there is a $\mathcal{U}$-relativized interpolant. This interpolant may furthermore only contain positive occurrences of $Q$ and positive occurrences of $R$ (and no occurrences of $P$ or $T$ whatsoever). An example of such an interpolant is $\exists x(Qx\land Rxx)$.
\lipicsEnd\end{example}

\begin{example}\label{ex:otto-optimal}
This example shows that Theorem~\ref{thm:otto} cannot be strengthened to interpolate on the relativizers (i.e., to require that all relativizers in the interpolant must occur as relativizers in both $\phi$ and $\psi$). Consider the valid \FO implication
\[
\models\forall x(Px\to \neg Qx)\to \forall x(Qx\to\neg Px).
\] 
Both formulas here are $\mathcal{U}$-relativized sentences, for $\mathcal{U}=\{P,Q\}$. However, they do not have any relativizers in common. If all relativizers in some interpolant occurred as relativizers in both $\phi$ and $\psi$, then that interpolant would be an $\emptyset$-relativized sentence. As we have already observed, an \FO formula is $\emptyset$-relativized if and only if it is quantifier-free. Therefore, the interpolant would have to be a quantifier-free sentence. The only candidates, up to logical equivalence, are $\top$ and $\bot$, and it is easy to see that neither is an interpolant.
\lipicsEnd\end{example}

The last interpolation theorem we mention here is the \emph{access interpolation theorem}~\cite{Benedikt16:generating,benedikt2016book}, a further refinement of Otto interpolation. 
It is discussed in detail also in~\refchapter{chapter:databases}.

\begin{definition}
An \emph{access method} is a pair $(R,X)$ where $R$ is a $n$-ary relation symbol and $X\subseteq\{1, \ldots, n\}$. 
\end{definition}
Intuitively, an access method (also known in the logic programming literature as a \emph{mode}) $(R,X)$ signifies a way to access the relation $R$, namely, by providing specific values for the arguments in $X$ as input, and retrieving the set of all matching tuples. The access method $(R,\emptyset)$ does not require specifying any values as input and therefore allows us to retrieve the full content of the relation $R$. An access method $(R,X)$ with $X=\{i_1, \ldots, i_n\}$ a non-empty set, on the other hand, only allows us to retrieve, for a specific tuple of values $a_1, \ldots, a_n$, the set of tuples in $R$ that have specified value $a_k$ in position $i_k$ for $k\leq n$. In particular, the access method $(R,\{1, \ldots, \arity(R)\})$ only allows testing if a specific tuple belongs to the relation $R$. There is a natural partial order on access methods: $(R,X)\preceq (R,Y)$ if $X\subseteq Y$. Intuitively, in this case, the former access method is ``at least as powerful as'' (or, ``can be used to implement'') the latter access method. For any set $\acc$ of access methods, we denote by $\acc^\uparrow$ the upwards closure of $\acc$ w.r.t.~the partial order $\preceq$. 

\begin{example}
Consider the \FO sentence $\forall x(Px\to Qx)$. We can determine whether this sentence holds in a structure $A$ using only the access methods $(P,\emptyset)$ (which allows us to retrieve the entire content of $P$) and $(Q,\{1\})$ (which allows us to test for any specific value whether it belongs to $Q$). Namely, we first retrieve the entire set $P^A$ and then we test, for each $a\in P^A$, whether or not $a \in Q^A$. Similarly, we can determine whether the sentence holds in a structure $A$ using access methods $(P,\emptyset)$ and $(Q,\emptyset)$ by retrieving the entire sets $P^A$ and $Q^A$ and testing if the former is a subset of the latter. On the other hand, having only the access methods $(P,\{1\})$ and $(Q,\{1\})$ is not enough, since we are not given any information about the domain of $A$.
\lipicsEnd\end{example}

The above example suggests a notion of answerability using access methods.
We say that a formula is \emph{answerable for membership} (or simply, \emph{answerable}) using a set of access methods $\acc$ if 
 given a structure $A$ and a tuple $\textbf{a}=a_1, \ldots, a_n$, 
    we can effectively test whether $A\models\phi(\textbf{a})$ 
    by a procedure that only accesses $A$ using the access methods in $\acc$. We will refrain here from giving a formal definition, but one can be found in~\refchapter{chapter:databases} based on the notion of an \emph{RA-plan} (cf.~also~\cite[Section~3.1]{benedikt2016book}).
Note: there are \FO formulas that are not answerable using \emph{any} set of access methods.
Examples include $\forall x P(x)$ and $\exists x \neg P(x)$. Indeed,  an \FO \emph{sentence} is answerable by a set of 
access methods only if it is \emph{domain-independent} (i.e., its truth value in a structure depends only on the 
denotation of the relation symbols and not
on the underlying domain of the structure, or, in other words, 
it ``ignores domain elements that do not participate in any relation'').%
\footnote{This can be derived from Corollary~\ref{cor:access-determinacy} below, by choosing $S$ to be the set of all access methods.}

Whether or not an \FO formula is answerable using a given set of access methods has a lot to do with the syntactic shape of the formula in question. Given an \FO formula $\phi$, we can associate to $\phi$ a set of access methods as follows:
\[
\begin{array}{lll}
\BindPatt(\top) = \BindPatt(x=y) &=& \emptyset, \\
\BindPatt((R(x_1,\ldots, x_n)) &=& \{(R,\{1, \ldots, n\})\}, \\
\BindPatt(\phi\land\psi) &=& \BindPatt(\phi)\cup \BindPatt(\psi), \\
\BindPatt(\neg\phi) &=& \BindPatt(\phi), \\
\BindPatt(\exists \textbf{x}(R(y_1, \ldots, y_n)\land \phi) &=& \BindPatt(\phi)\cup\{(R,X)\} \\
&&\quad\text{ where $X=\{i\leq n\mid y_i\notin \textbf{x}\}$}, \\
\BindPatt(\phi) &=& \text{undefined for all $\phi$ not of the above forms}
\end{array}
\]

In the penultimate clause above, $\textbf{x}$ denotes a tuple of variables (i.e., we allow for quantifying multiple variables at once). Furthermore, the binding atom $R(y_1, \ldots, y_n)$ is not required
to contain all free variables of $\phi$
(i.e., it is not 
required to be a ``guard'' in the sense of the 
guarded fragment~\cite{Andreka1998:Modal}).

We will say that a \FO formula is \emph{$\acc$-bounded} (for a set $\acc$ of access methods) if $\BindPatt(\phi)$ is well-defined and $\BindPatt(\phi)\subseteq \acc^\uparrow$.%
\footnote{It is not difficult to see that a \FO formula $\phi$ with
$\BindPatt(\phi)\subseteq \acc^\uparrow$ can always
be rewritten into an equivalent formula
$\phi'$ with $\BindPatt(\phi')\subseteq \acc$. However, this rewriting may require the use of the equality symbol (for instance, if $\acc=\{(R,\emptyset)\}$, then it may be necessary to rewrite $R(x,y)$ to 
$\exists uv(R(u,v)\land u=x\land v=y)$).
To ensure that all theorems stated in this section hold also in the equality-free case, it is necessary to phrase the definition of \emph{$\acc$-bounded} in terms of $S^\uparrow$ instead of $S$.}
Every $S$-bounded FO formula is answerable, in the sense described above, using the access methods $S$.

\begin{example}
To verify the truth of $\phi=\forall xy(R(x,y)\to S(x,y))$ in a structure $A$ using access methods, we need full access to the relation $R^A$, but we only need limited access to $S^A$ which allows us to test membership of specific tuples. This is reflected by the fact that $\BindPatt(\phi) = \{(R,\emptyset), (S,\{1,2\})\}$.
\lipicsEnd\end{example}

For simplicity, the following results are stated only for relational 
signatures (i.e., without constant symbols and function symbols). However, 
the results also hold true (suitably adapted) for signatures that include
constant symbols.

\begin{theorem}[Access interpolation~\cite{benedikt2016book}]
\label{thm:access-interpolation}
Fix any relational signature and set $\acc$ of access methods. For every pair of $\acc$-bounded \FO formulas $\phi,\psi$ such that $\models \phi\to\psi$ there is an $\acc$-bounded Lyndon interpolant $\vartheta$.\footnote{The original statement in~\cite{benedikt2016book} is slightly stronger, we simplified it here.}
\end{theorem}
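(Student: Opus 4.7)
The plan is to adapt the standard proof of Lyndon interpolation (via Maehara's method on a cut-free Gentzen calculus) to a sequent system whose quantifier rules explicitly respect access patterns, and then to show that the interpolant extracted from such a proof is automatically $\acc$-bounded.

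First, I would introduce a sequent calculus $\mathrm{G}_\acc$ for \FO in which the existential introduction rule on the right (and dually, the universal on the left) is restricted: one may introduce a quantifier block $\exists \mathbf{x}(R(y_1,\ldots,y_n)\land\phi)$ only when the associated access method $(R,\{i : y_i\notin \mathbf{x}\})$ lies in $\acc^\uparrow$. The $\acc$-boundedness hypothesis on $\phi$ and $\psi$ ensures that every quantifier appearing in either formula can be introduced by such a rule. Completeness of $\mathrm{G}_\acc$ for valid implications between $\acc$-bounded formulas is the first technical obligation; I would establish it by a Henkin-style model construction in which the witnesses selected for existential statements are forced to reflect the available access patterns, so that any countermodel to provability in $\mathrm{G}_\acc$ can be promoted to a genuine FO countermodel.

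Second, I would apply Maehara's method to a cut-free derivation of $\phi\Rightarrow\psi$ in $\mathrm{G}_\acc$. At each sequent $\Gamma \Rightarrow \Delta$ the formulas are partitioned into ancestors of $\phi$ and ancestors of $\psi$, and a candidate interpolant is built by induction on the derivation. Propositional connectives are handled exactly as in the classical Lyndon proof, with polarity tracking to obtain the $\relsig^+$ and $\relsig^-$ constraints on the interpolant. The binding-atom quantifier rule is the new case: when such a rule fires with binding atom $R(y_1,\ldots,y_n)$, the recursively constructed interpolant is wrapped in a fresh quantifier block guarded by the same atom (existential or universal according to which side of the split contained the rule's eigenvariable).

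The main obstacle is verifying that the interpolant produced by this construction is itself $\acc$-bounded. This rests on two observations: every guard atom reused from the proof carries an access method already licensed by $\mathrm{G}_\acc$, and hence sits in $\acc^\uparrow$; and Maehara's partition can be arranged so that a guard atom $R(\ldots)$ is inserted into the interpolant only when $R$ belongs to $\sig(\phi)\cap\sig(\psi)$, which is needed in order to satisfy the standard Craig signature condition alongside access-boundedness. The latter requires a delicate symmetric case analysis analogous to the one used in Otto's interpolation theorem, and the equality-free case requires the upward closure built into the definition of $\acc$-boundedness (as flagged in the footnote preceding the statement) to absorb mismatches in access patterns arising when the two branches of the Maehara split disagree on the precise binding atom used.
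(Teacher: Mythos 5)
The paper itself does not prove Theorem~\ref{thm:access-interpolation}; it cites~\cite{benedikt2016book} and indicates (Section~\ref{sec:techniques}) that the proof is proof-theoretic, via a tableau construction of the kind illustrated there for plain Craig interpolation. Your sequent-calculus/Maehara strategy is therefore in the same family as the intended proof (the paper even notes that closed tableaux are cut-free sequent proofs read upside down), but as written it has a genuine gap exactly at the step where the real work happens. When the eigenvariable rule fires and the partial interpolant must be quantified, you propose to wrap it "in a fresh quantifier block guarded by the same atom," i.e.\ by the binding atom $R(y_1,\ldots,y_n)$ of the rule that was applied. That atom lives on one side of the Maehara partition, and its relation symbol need not occur in the other input formula at all; reusing it as the guard of the interpolant then violates the Craig signature condition (and a fortiori the Lyndon polarity conditions), so the literal recipe fails. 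You acknowledge this by saying the partition "can be arranged" so that guards lie in $\sig(\phi)\cap\sig(\psi)$, but no mechanism is given, and this is precisely the crux of access interpolation: the interpolant's quantifiers must be supplied with binding atoms over \emph{shared} relations, with the right polarities, whose access methods lie in $\acc^\uparrow$. In the actual construction this information has to be harvested from atoms on the branch (on the appropriate side) that witness how the constant being generalized became accessible, which is why the proof in~\cite{benedikt2016book} tracks accessibility of constants rather than simply echoing the binding atom of the rule. Note also that the theorem deliberately does \emph{not} claim $\BindPatt(\vartheta)\subseteq\BindPatt(\phi)\cap\BindPatt(\psi)$ (Example~\ref{ex:otto-optimal} rules that out), so the guard you need may come from the opposite side's binding patterns; an argument that never looks across the partition cannot produce it.

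A secondary, less serious point: the restricted calculus $\mathrm{G}_\acc$ and its Henkin-style completeness proof are unnecessary machinery. In a cut-free proof of an implication between $\acc$-bounded formulas, the subformula property already guarantees that every quantifier inference on ancestors of $\phi$ or $\psi$ decomposes a binding pattern licensed by $\acc^\uparrow$, so ordinary completeness suffices; the nontrivial completeness claim you flag as your "first technical obligation" buys you nothing that the standard calculus does not already provide. The effort saved there is better spent on the quantifier case of the interpolant construction described above, which is where your proposal currently asserts rather than proves the theorem.
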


As a corollary, we obtained a semantic characterization of $\acc$-boundedness in the form of a preservation theorem. In order to state this 
characterization, fix a set $\acc$ of access methods. 
For a given structure $A$ and tuple $\textbf{a}$, 
let us denote by 
$AccPart_\acc(A,\textbf{a})$ the ``part of $A$ that is accessible from $\textbf{a}$ using $\acc$'',
i.e., the smallest set $D\subseteq Dom(A)$ that contains $\textbf{a}$ and is closed under the following rule:
\begin{center}
    If $(a_1, \ldots a_n)\in R^A$, $(R,X)\in \acc$ and $\{a_i\mid i\in X\}\subseteq D$, then 
    $\{a_1,\ldots,a_n\}\subseteq D$
\end{center}
A FO-formula $\phi(\textbf{x})$ is \emph{access-determined} by $\acc$ if for all structures $A$ and tuples $\textbf{a}$,
$A\models\phi(\textbf{a})$ iff $A|D\models\phi(\textbf{a})$, where $D=AccPart_\acc(A,\textbf{a})$ and where $A|D$ is the substructure of $A$ induced by $D$.
Access determinacy can be viewed as a refined form of \emph{invariance for generated submodels} (cf.~\cite{Areces2001hybrid}) parameterized by a set of access methods. 
Access-determinacy is also a necessary precondition for 
answerability using a given set of access methods. 

\begin{corollary} 
\label{cor:access-determinacy}
Let $\phi$ be any \FO formula over a relational signature
and let $\acc$ be any set of access methods.
\begin{enumerate}
    \item If $\phi$ is $\acc$-bounded
    then $\phi$ is access-determined by $\acc$. 
    \item If $\phi$ is access-determined by $\acc$ then $\phi$ is equivalent to an $\acc$-bounded \FO-formula.
\end{enumerate}
\end{corollary}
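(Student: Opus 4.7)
The plan is to prove the two parts separately: part~(1) by structural induction, part~(2) by a preservation-via-compactness argument.

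For part~(1), I would induct on an $\acc$-bounded formula $\phi(\textbf{x})$, showing $A\models\phi(\textbf{a})$ iff $A|D\models\phi(\textbf{a})$ with $D=AccPart_\acc(A,\textbf{a})$. The atomic, conjunctive and negation cases are routine (negation works because $\BindPatt(\neg\phi)=\BindPatt(\phi)$). The only substantive case is the guarded existential $\phi=\exists\textbf{y}(R(z_1,\ldots,z_n)\land\psi)$ with binding pattern $(R,X)\in\acc^\uparrow$, where $X$ indexes the positions of the free variables among $z_1,\ldots,z_n$. For any witness $\textbf{b}$ in $A$, the input arguments at positions in $X$ are drawn from $\textbf{a}$ and hence already lie in $D$; the closure rule in the definition of $AccPart_\acc$ then forces the entire matched tuple (including all of $\textbf{b}$) into $D$. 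The same witness therefore works in $A|D$, and since $AccPart_\acc(A,\textbf{a}\textbf{b})=D$ as well, the induction hypothesis on $\psi$ applies cleanly. The converse direction is immediate since $A|D$ is a substructure of $A$.

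For part~(2), the plan is the standard preservation argument via compactness. A key preliminary is that the class of $\acc$-bounded formulas is closed under Boolean operations, since $\BindPatt$ passes unchanged through $\neg$ and is additive over $\land$. Letting $T(\textbf{x})=\{\psi:\psi\text{ is }\acc\text{-bounded and }\phi\models\psi\}$, it suffices to prove $T\models\phi$: then compactness plus closure under conjunction produces a single $\acc$-bounded formula equivalent to $\phi$. Assume for contradiction $T\not\models\phi$. Combining closure under negation with compactness in a routine way, one obtains structures with distinguished tuples $(A,\textbf{a})$ and $(B,\textbf{b})$ satisfying exactly the same $\acc$-bounded formulas but with $A\models\phi(\textbf{a})$ and $B\not\models\phi(\textbf{b})$.

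The main obstacle is deriving a contradiction from this pair. I would pass to $\omega$-saturated elementary extensions $A^*\succ A$ and $B^*\succ B$ and consider their accessible substructures $A^*|D_{A^*}$ and $B^*|D_{B^*}$ around $\textbf{a},\textbf{b}$. By part~(1) these substructures agree with $A^*$ and $B^*$ on $\acc$-bounded formulas, so they still satisfy exactly the same $\acc$-bounded formulas; by access-determinacy of $\phi$ they still disagree on $\phi$. The core step is then to construct a back-and-forth system of finite partial isomorphisms between $A^*|D_{A^*}$ and $B^*|D_{B^*}$ extending $\textbf{a}\mapsto\textbf{b}$. The idea is that each element of an accessible substructure is generated from the distinguished tuple by a finite sequence of access-method applications, each witnessable by an existential $\acc$-bounded formula; together with $\omega$-saturation and the shared $\acc$-bounded type, this should let every one-step extension on one side be matched on the other. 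A back-and-forth system yields $\mathcal{L}_{\infty\omega}$-equivalence, hence in particular FO-elementary equivalence, of the two accessible substructures, contradicting their disagreement on $\phi$.
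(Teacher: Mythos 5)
Your part~(1) is the same formula induction that the paper gestures at, and your handling of the guarded existential (the guard's input positions lie in $D$, the closure rule pulls the whole witnessing tuple into $D$, and $AccPart_\acc(A,\textbf{a}\textbf{b})=AccPart_\acc(A,\textbf{a})$ so the induction hypothesis applies) is the right way to carry it out. For part~(2) you take a genuinely different route. The paper proves it as a \emph{corollary of the access interpolation theorem} (Theorem~\ref{thm:access-interpolation}): one introduces a fresh unary predicate $D$, an $\acc\cup\{(D,\emptyset)\}$-bounded formula $\chi_D$ saying that $D$ contains the accessible part, observes that access-determinacy makes $\models(\chi_D\land\phi^D)\to(\chi_{D'}\to\phi^{D'})$ valid, and reads off an $\acc$-bounded interpolant equivalent to $\phi$. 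You instead reprove the preservation statement from scratch by compactness, $\omega$-saturation, and a back-and-forth system between the accessible substructures. Both approaches are viable: the paper's is short once Theorem~\ref{thm:access-interpolation} is in hand and, since interpolants can be extracted from proofs, it produces the $\acc$-bounded equivalent effectively; yours is self-contained (no interpolation theorem needed) but inherently non-constructive, and it amounts to redoing the kind of amalgamation/back-and-forth work that the interpolation theorem packages up.

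Two points in your back-and-forth step need to be made explicit. First, the element $d^*$ you choose in $B^*$ must itself land in $D_{B^*}$: for this you should realize, by saturation, the \emph{joint} type consisting of the atoms of an entire access path for $c^*$ (from $\textbf{a}$, with the new path elements as extra variables) together with the $\acc$-bounded type of $c^*$ over the current parameters; finite satisfiability of this joint type uses the observation that prefixing an $\acc$-bounded formula with the guarded existential block describing the path yields again an $\acc$-bounded formula, so your invariant transfers each finite conjunction, and the realized path atoms then force $d^*\in D_{B^*}$ via the closure rule. Second, atoms of a relation that has \emph{no} access method in $\acc$ are not $\acc$-bounded, so your invariant does not make the partial maps preserve such atoms, and the final step (elementary equivalence of the induced substructures, contradicting the disagreement on $\phi$) needs every relation symbol occurring in $\phi$ to carry some access method. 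This is not really a defect of your argument: under the paper's induced-substructure reading of $A|D$ the statement itself requires such an assumption (with $\acc=\{(R,\emptyset)\}$, the sentence $\exists x(R(x)\land T(x))$ is access-determined, yet no $\acc$-bounded formula can mention $T$), but you should state the assumption where you use it.
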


\begin{proof}[Proof sketch]
    The first item can be proved by a formula induction. The second item follows from Theorem~\ref{thm:access-interpolation} as follows:
    assume that $\phi(x_1, \ldots, x_m)$ is access-determined by $S$.     Let $D$ be a fresh unary predicate. We can construct an $S\cup\{(D,\emptyset)\}$-bounded FO-formula
    $\chi_D(x_1, \ldots, x_m)$ that, intuitively speaking, expresses that the denotation of $D$
    includes the accessible part of the structure
    (with respect to $S$ and $x_1, \ldots, x_m$).
    It follows by the definition of access-determinacy that $\models (\chi(D) \land \phi^D)\to (\chi(D')\to \phi^{D'})$
    where $\phi^D$ is the syntactic relativization of $\phi$ by $D$ (i.e., the formula obtained from $\phi$ by relativizing every quantifier to $D$).
    It can then be shown that an interpolant for this valid entailment, as produced by Theorem~\ref{thm:access-interpolation}, is an $S$-bounded FO-formula that is equivalent to $\phi$.
    A detailed version of this argument can be found in Section 6 of~\refchapter{chapter:databases}.
\end{proof}

The Otto interpolation theorem (for relational signatures) follows from the access interpolation theorem, because an \FO-formula $\phi$ is $\mathcal{U}$-relativized 
if and only if $\phi$ is $\acc$-bounded for $\acc = \{(U,\emptyset)\mid U\in \mathcal{U}\}\cup\{(R,\{1,\ldots, \arity(R)\})\mid R\in \sig(\phi)\}$.

One may wonder if the above interpolation theorem
could be further strengthened to
$$\BindPatt(\vartheta)\subseteq \BindPatt(\phi)\cap \BindPatt(\psi).$$
The answer is negative, as follows from Example~\ref{ex:otto-optimal}.

So far, we have discussed Craig interpolation and several of its refinements. A number of other closely related notions exist that can be viewed as 
weak forms of Craig interpolation, including \emph{$\Delta$-interpolation} and \emph{Beth definability}, discussed in Sections~\ref{sec:delta} and~\ref{sec:beth}, respectively.

\subsection{Negative Results}

We now present two negative results. The first pertains to \emph{finite model theory}, i.e., the study of first-order logic when restricted to finite structures.

\begin{theorem}[Failure of Craig interpolation in the finite \cite{Gurevich1984:towards}]
\label{thm:failure-in-the-finite}
There is an \FO implication (over a  relational signature) that is valid in the finite, i.e.,
\[
\models_{\text{fin}} \phi\to\psi,
\]
and for which there is no \FO formula $\vartheta$ with $sig(\vartheta)\subseteq sig(\phi)\cap sig(\psi)$ satisfying $\models_{\text{fin}}\phi\to\vartheta$ and $\models_{\text{fin}}\vartheta\to\psi$.
\end{theorem}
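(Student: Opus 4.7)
The plan is to construct $\phi$ and $\psi$ in disjoint relational signatures so that $\models_{\text{fin}}\phi\to\psi$ holds, while any interpolant would be forced to express ``the universe has even cardinality'' in the empty signature with equality --- a parity property that no \FO sentence in the empty signature can capture.

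Concretely, I would take disjoint signatures $\tau_1 = \{<_1, E_1\}$ and $\tau_2 = \{<_2, E_2\}$, with $<_i$ binary and $E_i$ unary. Let $\phi$ be the $\tau_1$-sentence asserting that $<_1$ is a discrete linear order on the universe (with minimum, maximum, and immediate successors), that $E_1$ holds at the $<_1$-minimum, that $E_1(x) \leftrightarrow \neg E_1(y)$ whenever $y$ is the immediate $<_1$-successor of $x$, and that $\neg E_1$ holds at the $<_1$-maximum. Analogously, define a $\tau_2$-sentence $\phi'$ with every $1$ replaced by $2$, except that it requires $E_2$ to hold at the $<_2$-maximum. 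Set $\psi := \neg\phi'$. On a finite structure, $\phi$ is satisfiable iff the universe has even cardinality (the alternating colouring forces this), while $\phi'$ is satisfiable iff the universe has odd cardinality. Since these cannot simultaneously hold, $\models_{\text{fin}} \phi \to \neg\phi' = \psi$.

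Next I would observe that $\sig(\phi)\cap\sig(\psi) = \tau_1\cap\tau_2 = \emptyset$, so any candidate interpolant $\vartheta$ is an \FO sentence in the empty relational signature (with equality), whose truth on a finite structure depends only on the cardinality of the universe. For each even $n\geq 2$, some finite $(\tau_1\cup\tau_2)$-structure of size $n$ satisfies $\phi$ (interpret $\tau_2$ arbitrarily), so $\models_{\text{fin}}\phi\to\vartheta$ forces $\vartheta$ to hold on every even-cardinality finite universe. Symmetrically, for each odd $n\geq 1$, some structure of size $n$ satisfies $\phi'$, i.e., $\neg\psi$, so $\models_{\text{fin}}\vartheta\to\psi$ forces $\vartheta$ to fail on every odd-cardinality finite universe.

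The hard part --- and the crux of the argument --- is to rule out such a $\vartheta$. Here I would invoke the classical Ehrenfeucht--Fra\"iss\'e fact that in the empty signature with equality, any two finite structures of cardinality at least $r$ are indistinguishable by sentences of quantifier rank $r$; equivalently, every empty-signature sentence is eventually constant on large finite cardinalities. This directly contradicts the required parity alternation, so no such $\vartheta$ can exist.
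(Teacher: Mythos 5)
Your proof is correct and follows essentially the same route as the paper: two sentences over disjoint relational signatures whose finite models have even, respectively odd, cardinality, so that any interpolant would be an empty-signature sentence defining parity of the domain in the finite, which is ruled out by the standard Ehrenfeucht--Fra\"iss\'e argument. The only difference is the parity gadget (a discrete linear order with an alternating unary predicate instead of the paper's equivalence relations with classes of size two, respectively size two plus one class of size three), which is immaterial to the structure of the argument.
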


\begin{proof}[Proof (sketch)]
Let $\phi$ be the \FO sentence with $\sig(\phi)=\{R\}$ expressing that $R$ is an equivalence relation and every equivalence class of $R$ contains precisely two elements. Similarly, let $\psi$ be the \FO sentence with $\sig(\phi)=\{S\}$ expressing that $S$ is an equivalence relation and that every equivalence class of $S$ contains precisely two elements except for one equivalence class, which contains three elements. Every finite model of $\phi$ has a domain of even cardinality, while every finite model of $\psi$ has a domain of odd cardinality. Therefore, $\models_{\text{fin}}\phi\to\neg\psi$. A Craig interpolant must have an empty signature and must express (over finite structures) that the domain has even cardinality. It is well known (and can be shown using standard techniques such as Ehrenfeucht-Fra\"{i}ss\'{e} games), that no such formula exists.
\end{proof}

Example~\ref{ex:order-invariant} also shows that several important consequences of Craig interpolation fail in the finite as well. The second negative result is concerned with \emph{uniform interpolation}, a
topic discussed in detail in~\refchapter{chapter:uniform}.

\begin{definition}[Uniform interpolant]
An \FO-formula $\vartheta$ is a \emph{uniform interpolant} for an \FO-formula $\phi$ w.r.t.~a signature $\sigma\subseteq \sig(\phi)$ if $\models\phi\to\vartheta$, $\sig(\vartheta)\subseteq\sigma$, and for all \FO-formulas $\psi$ satisfying $\models\phi\to\psi$ and  $\sig(\phi)\cap\sig(\psi)\subseteq \sigma$, it holds that $\models\vartheta\to\psi$.
\end{definition}

In other words, a uniform interpolant for an \FO-formula $\phi$ is a formula $\vartheta$ which is a Craig interpolant for \emph{all} valid implications $\models\phi\to\psi$, where $\vartheta$ depends only on $\phi$ and the signature $\sigma$. In the literature, the above notion of ``uniform interpolant'' is more commonly referred to as \emph{uniform post-interpolant}. There is also a dual notion of \emph{uniform pre-interpolant}. For simplicity, we only consider uniform post-interpolants here, and we will call them simply ``uniform interpolant''. Propositional logic, intuitionistic logic, and certain modal logics admit uniform interpolants. Uniform interpolation fails, however, for \FO \cite{henkin1963extension}\footnote{An interesting fact of history is that, by William Craig's own account, as told during an event in Berkeley celebrating his 90th birthday, the Craig interpolation theorem was obtained as the result of a failed attempt to prove uniform interpolation for first-order logic (\cite{vanbenthem2008} and \emph{Ph. Kolaitis,  personal communication}). For more history of the Craig interpolation theorem, see also \cite{craig2008elimination, craig2008road}.}.

\begin{theorem}[Failure of uniform interpolation for \FO, folklore]
    There is an \FO formula $\phi$ and a signature
    $\sigma\subseteq\sig(\phi)$ for which there is no uniform interpolant.
\end{theorem}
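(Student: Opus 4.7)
The plan is to exhibit concrete $\phi$ and $\sigma$ for which no uniform interpolant can exist. The key observation is that if $\vartheta$ is a uniform interpolant for $\phi$ w.r.t.~$\sigma$, then by definition $\vartheta$ must imply every FO formula $\psi$ with $\models \phi \to \psi$ and $\sig(\phi) \cap \sig(\psi) \subseteq \sigma$; hence if infinitely many such $\psi$ collectively cannot be captured by any single FO formula of signature $\sigma$, no uniform interpolant can exist.

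Concretely, I would take $\sigma = \emptyset$ (equality only) and let $\phi$ be the (single) FO sentence over signature $\{<\}$ axiomatizing a strict linear order without a maximum element (irreflexivity, transitivity, trichotomy, and $\forall x \exists y \, (x < y)$). Every model of $\phi$ contains an infinite ascending $<$-chain and is therefore infinite, so $\phi$ implies each sentence $\psi_n := \exists x_1 \cdots x_n \bigwedge_{i < j} x_i \neq x_j$ asserting that the domain has at least $n$ elements. Since $\sig(\phi) \cap \sig(\psi_n) = \emptyset \subseteq \sigma$, any candidate uniform interpolant $\vartheta$ would have to satisfy $\vartheta \models \psi_n$ for every $n \geq 1$, forcing every model of $\vartheta$ to be infinite.

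The final step is to observe that no FO sentence over the empty signature can have only infinite models: up to logical equivalence, such a sentence is a Boolean combination of cardinality statements of the form $\exists^{\geq n} x \, (x = x)$, and any satisfiable such Boolean combination is realized in some finite model of sufficiently large size. This contradicts the existence of $\vartheta$. The main content of the argument lies in this last non-definability observation (routine via quantifier elimination over pure equality, or equivalently via Ehrenfeucht--Fra\"{i}ss\'{e} games); the rest of the argument is bookkeeping, immediate from the definition of uniform interpolant together with the observation that a strict linear order without a maximum forces the domain to be infinite.
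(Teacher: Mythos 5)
Your proposal is correct and follows essentially the same route as the paper: the same witness $\phi$ (a linear order with no greatest element), the same $\sigma=\emptyset$, and the same family $\psi_n$ asserting at least $n$ distinct elements. The only difference is cosmetic: where the paper appeals to a compactness-style argument to rule out $\vartheta$, you argue explicitly that every satisfiable empty-signature sentence (being a Boolean combination of cardinality statements, by quantifier elimination or Ehrenfeucht--Fra\"{i}ss\'{e} games) has a finite model, noting that $\vartheta$ is satisfiable since $\phi$ is; this is a perfectly adequate substitute.
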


\begin{proof}
Let $\phi$ be the \FO-sentence in one relation symbol $R$, expressing that $R$ is a total linear order without end-points. Let $\sigma$ furthermore be the empty signature. Finally, let $\psi_n$ (for $n>1$) be the \FO-sentence $\exists x_1\ldots x_n\bigwedge_{1\leq i<j\leq n}x_i\neq x_j$, expressing the existence of $n$ distinct objects. Clearly, $\models\phi\to\phi_n$ for all $n$. A uniform interpolant would have to be a \FO-sentence $\vartheta$ that is implied by $\phi$ and that implies $\phi_n$ for all $n>1$. It can be shown (for instance, using a compactness argument) that no such $\vartheta$ exists.
\end{proof}

\section{Applications of Interpolation for First-Order Logic}
\label{sec:app}

In this section, we review a few ways in which interpolation theorems for \FO can be applied.

\subsection{\texorpdfstring{$\exists\text{SO}\cap\forall\text{SO}=\FO$}{ESO\&ASO=FO}}
\label{sec:delta}

The Craig interpolation theorem implies
something interesting about the relationship between
first-order logic (\FO) and second-order logic (\SO). Recall that second-order logic extends
first-order logic with second-order variables (of arbitrary arity)
and second-order quantifiers. 
Every second-order formula over a signature $\tau$ can be written in 
prenex form as 
\[ Q_1 X_1\ldots Q_n X_n . \phi\]
where $Q_1, \ldots, Q_n\in\{\forall,\exists\}$, 
$X_1, \ldots, X_n$ are second-order variables,
and $\phi$ is a first-order formula over the signature $\tau\cup\{X_1, \ldots, X_n\}$.
We write $\ESO$ and (resp. $\USO$) to denote \emph{existential second-order logic} (resp. \emph{universal second-order logic}), that is, the fragment of \SO consisting of formulas in the above prenex form where $Q_1=\cdots=Q_n=\exists$ (resp. $Q_1=\cdots=Q_n=\forall$).

\begin{example}\label{ex:so}
    The following \ESO sentence defines
    the class of (possibly infinite)  directed graphs that are \emph{3-colorable}:
    \[ \exists X_1 \exists X_2\exists X_3 \big(\forall x(X_1(x)\lor X_2(x)\lor X_3(x))\land \forall xy(R(x,y)\to\bigwedge_{i \leq 3} (X_i(x)\to \neg X_i(y)))\big)
    \]
Similarly, the following \USO sentence defines 
    the class of (possibly infinite) \emph{strongly connected}
    directed graphs: 
    \[ \forall X \big(\exists y(X(y))\land \forall yz(X(y)\land Ryz\to X(z)) \to \forall y(X(y))\big)\]
As a final example, the following \USO sentence defines the class of \emph{finite} structures:
\[
    \forall R\big((\forall x\exists y R(x,y)\land \forall xyz(R(x,y)\land R(y,z)\to R(x,z)))\to \exists x R(x,x)\big).
\]
None of these classes is \FO-definable.
\lipicsEnd\end{example}

\begin{theorem}[$\Delta$-interpolation]
    Let $K$ be a class of structures that is
    definable by a $\ESO$-sentence
    as well as by a $\USO$-sentence.
    Then $K$ is already definable by an \FO-sentence.
\end{theorem}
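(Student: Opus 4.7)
The plan is to reduce the statement to a direct application of the Craig interpolation theorem for \FO by stripping off the second-order quantifiers and treating second-order variables as fresh relation symbols.

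First I would fix witnesses $\phi=\exists X_1\cdots\exists X_n\,\phi_0$ and $\psi=\forall Y_1\cdots\forall Y_m\,\psi_0$ of the two definitions, where $\phi_0,\psi_0$ are first-order, and where (by renaming if necessary) the tuples $\bar X$ and $\bar Y$ are disjoint from each other and disjoint from the signature $\tau$ of $K$. Since both $\phi$ and $\psi$ define $K$, they are logically equivalent, and in particular $\models\phi\to\psi$.

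The key observation, which I expect to be the main (though modest) technical step, is to descend from an entailment between \SO-sentences to an entailment between \FO-formulas over the expanded signature $\tau\cup\{\bar X,\bar Y\}$, where the $X_i$ and $Y_j$ are now read as ordinary relation symbols. Concretely, I would show $\models\phi_0\to\psi_0$ as follows: let $A^+$ be any structure over $\tau\cup\{\bar X,\bar Y\}$ satisfying $\phi_0$. Its $\tau$-reduct $A$ then satisfies $\exists\bar X\phi_0=\phi$, hence also $\psi=\forall\bar Y\psi_0$, so $\psi_0$ holds in every expansion of $A$ by an interpretation of $\bar Y$, and in particular in $A^+$.

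Now apply Craig interpolation to $\models\phi_0\to\psi_0$. Since $\sig(\phi_0)\subseteq\tau\cup\{\bar X\}$ and $\sig(\psi_0)\subseteq\tau\cup\{\bar Y\}$, and $\bar X,\bar Y$ are disjoint, the intersection $\sig(\phi_0)\cap\sig(\psi_0)$ is contained in $\tau$. Craig interpolation therefore yields an \FO-formula $\vartheta$ over $\tau$ with $\models\phi_0\to\vartheta$ and $\models\vartheta\to\psi_0$.

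Finally I would verify that $\vartheta$ defines $K$. If $A\in K$, then $A\models\phi$, so some expansion $A^+$ by $\bar X$ satisfies $\phi_0$, hence $\vartheta$; but $\vartheta$ is a $\tau$-sentence, so $A\models\vartheta$. Conversely, if $A\models\vartheta$, then for every expansion $A^+$ of $A$ by $\bar Y$ we have $A^+\models\vartheta$ and hence $A^+\models\psi_0$, giving $A\models\forall\bar Y\psi_0=\psi$ and so $A\in K$. Thus $K$ is \FO-definable by $\vartheta$.
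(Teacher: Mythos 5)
Your proof is correct and follows essentially the same route as the paper's: strip the second-order quantifiers, note that the equivalence of the two definitions yields a valid first-order implication $\models\phi_0\to\psi_0$ over the expanded signature, apply Craig interpolation, and use the disjointness of the quantified relation variables to conclude the interpolant is a $\tau$-sentence defining $K$. You merely spell out in more detail the semantic justification of $\models\phi_0\to\psi_0$ and the final verification, which the paper states more briefly.
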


\begin{proof}
    Let $\phi = \exists R_1\ldots R_n.\phi'$ be
    the $\ESO$-sentence defining $K$, and 
    let $\psi = \forall S_1\ldots S_m.\psi'$ be
    the $\USO$-sentence defining $K$. 
    We may assume without loss of generality
    that $\{R_1, \ldots, R_n\}\cap\{S_1, \ldots, S_m\}=\emptyset$.
    From the fact that $\phi$ and $\psi$ are
    equivalent, 
    it follows that
    $\models \phi'\to\psi'$.
    Let $\vartheta$ be any Craig interpolant for this valid \FO-implication. In particular,
    $\models\phi'\to\vartheta$ and $\models\vartheta\to\psi'$. Since 
    $R_1\ldots R_n$ and $S_1\ldots S_m$ do not occur in $\vartheta$, it follows that
    $\models\phi\to\vartheta$ and $\models\vartheta\to\psi$. Since
    $\phi$ and $\psi$ are equivalent, it follows
    that $\vartheta$ is equivalent to $\phi$ and $\psi$ as well. Hence, $\vartheta$ defines the class $K$.
\end{proof}

The same argument used here shows something stronger, namely:
\emph{whenever $K\subseteq K'$ with $K$ an \ESO-definable class of structures and $K'$ a
\USO-definable class of structures, then there is an \FO-definable class of structures $K''$ with
$K\subseteq K''\subseteq K'$.} This
stronger ``\FO-separability'' property is in fact,
in a precise sense, equivalent to Craig interpolation (see the discussion in \refchapter{chapter:modeltheory} on the \emph{separation property}).

The name ``$\Delta$-interpolation'' originates from a notation system commonly used for denoting logical quantifier prenex hierarchies. More precisely, $\Sigma^1_n$ and $\Pi^1_n$ are fragments of second-order logic defined as follows (for $n\geq 0$): $\Sigma^1_0=\Pi^1_0=\FO$; $\Sigma^1_{n+1}$ consists of all second-order formulas of the form $\exists R_1\ldots R_n\phi$ for $\phi\in\Pi^1_n$ while $\Pi^1_{n+1}$ consists of all second-order formulas of the form $\forall R_1\ldots R_n\phi$ for $\phi\in\Sigma^1_n$. A class of structures  is said to be in $\Sigma^1_n$, or in $\Pi^1_n$, if it can be defined by a $\Sigma^1_n$-sentence, respectively, by a $\Pi^1_n$-sentence. Finally, a class of structures is in $\Delta^1_n$ is both $\Sigma^1_n$ and $\Pi^1_n$ (where $\Delta$ stands for \emph{Durchschnitt}, the German term for intersection). Following this notational convention, $\ESO = \Sigma^1_1$ and $\USO = \Pi^1_1$, and the above $\Delta$-interpolation states that $\Delta^1_1=\FO$. $\Delta$-interpolation is also known as \emph{Souslin-Kleene interpolation}, owing to the similarity it bears to the Souslin-Kleene theorem (\cite[Chapter 20]{shoenfield2017recursion}), as well as to Post’s theorem (\cite[Chapter 14]{shoenfield2017recursion}), in recursion theory and descriptive set theory. For further reading on Souslin-Kleene interpolation, see the discussion in \refchapter{chapter:modeltheory}.

\subsection{Preservation Theorems}

Another important application of interpolation theorems is that they imply preservation theorems.
We already saw an example of this  (Corollary~\ref{cor:access-determinacy}), but we will discuss this in some more detail here.
A preservation theorem is a theorem that
characterizes a semantic property of \FO formulas in terms of a syntactic property.
We will illustrate this with a concrete example.
Let us say that an \FO formula $\phi$ is \emph{upward monotone} in a relation symbol $R$ if the following
holds for all structures $A$ and $B$: 
if $A$ and $B$ have the same domain and the 
same interpretation for every symbol other than $R$,
and $R^A\subseteq R^B$, and 
$A,g\models\phi$ for some variable assignment $g$, then also $B,g\models\phi$.
The following preservation theorem follows 
from the Lyndon interpolation theorem:

\begin{theorem}[Lyndon preservation~\cite{Lyndon1959preserved}]
For all \FO formula $\phi$ and relation symbols $R$, the following are equivalent:
\begin{enumerate}
    \item $\phi$ is upward monotone in $R$.
    \item $\phi$ is equivalent to an \FO formula $\phi'$ with $R\not\in \relsig^-(\phi)$.
\end{enumerate}
\end{theorem}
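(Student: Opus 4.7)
The plan is to prove the non-trivial direction $(1)\Rightarrow(2)$ via Lyndon interpolation; the converse direction $(2)\Rightarrow(1)$ follows by a straightforward induction on the construction of $\phi'$, using that $R$ occurs only in the scope of an even number of negations, so that enlarging $R$ can only ``help'' satisfy subformulas in which it appears.

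For the main direction, suppose $\phi$ is upward monotone in $R$. Let $R'$ be a fresh relation symbol of the same arity as $R$, let $\bar{x}$ be a tuple of fresh variables of that arity, and write $\phi[R/R']$ for the formula obtained from $\phi$ by replacing every occurrence of $R$ with $R'$. I claim that the implication
\[
\models \phi \;\to\; \bigl(\forall \bar{x}\,(R(\bar{x})\to R'(\bar{x}))\;\to\; \phi[R/R']\bigr)
\]
is valid. Indeed, given any structure $A$ with $A\models\phi$ and $R^A\subseteq R'^A$, the structure $A'$ obtained from $A$ by reinterpreting $R$ as $R'^A$ satisfies $\phi$ by upward monotonicity in $R$, and $A'\models\phi$ is equivalent to $A\models\phi[R/R']$.

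Now apply the Lyndon interpolation theorem to this valid implication, with $\chi := \forall \bar{x}\,(R(\bar{x})\to R'(\bar{x}))\to \phi[R/R']$ as the consequent. The only occurrence of $R$ in $\chi$ is inside the antecedent of an implication nested inside another antecedent, so it lies within the scope of two (implicit) negations; hence $R\notin \relsig^-(\chi)$. Let $\vartheta$ be a Lyndon interpolant. Since $R'\notin\sig(\phi)$, we have $R'\notin\sig(\vartheta)$, and since $R\notin\relsig^-(\chi)$, we have $R\notin\relsig^-(\vartheta)$, which gives condition (2) of the theorem.

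It remains to argue $\vartheta\equiv\phi$. The direction $\models\phi\to\vartheta$ is immediate from interpolation. For $\models\vartheta\to\phi$, take any model $A$ of $\vartheta$ over $\sig(\phi)$ and expand it by interpreting $R'$ as $R^A$; then $\forall \bar{x}\,(R(\bar{x})\to R'(\bar{x}))$ holds trivially, so $\models\vartheta\to\chi$ forces $A\models\phi[R/R']$, which under this expansion is equivalent to $A\models\phi$. The main delicate step is the polarity bookkeeping that yields $R\notin\relsig^-(\chi)$; everything else is routine manipulation of substitutions and expansions.
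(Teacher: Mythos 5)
Your proposal is correct and follows essentially the same route as the paper's proof: the same auxiliary implication $\models \phi \to (\forall \bar{x}(R\bar{x}\to R'\bar{x})\to \phi[R/R'])$ with a fresh copy $R'$, the same application of Lyndon interpolation, and the same polarity count showing $R\notin\relsig^-(\vartheta)$. The only difference is that you spell out the validity check and the equivalence $\vartheta\equiv\phi$ (via expanding a model of $\vartheta$ with $R'$ interpreted as $R$), steps the paper leaves as ``easy to see.''
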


\begin{proof}
We only treat the interesting direction, namely from 1 to 2. Let $\phi$ be upward monotone in $R$. It follows that the following implication is valid:
\[ \models \phi \to (\forall \textbf{x}(R\textbf{x}\to R'\textbf{x})\to \phi[R/R'])\]
where $R'$ is a fresh relation symbol of the same arity as $R$, and $\phi[R/R']$ is the formula obtained  by replacing every occurrence of $R$ in $\phi$ by $R'$. Let $\vartheta$ be a Lyndon interpolant for this valid \FO implication. Since $R$ occurs only positively in the consequent of the implication (in the scope of two implications), it follows that $R\not\in\relsig^-(\vartheta)$. Furthermore, it is easy to see that $\vartheta$ is equivalent to $\phi$.
\end{proof}

The same general technique can be used to derive many other preservation theorems. For instance, the \emph{{\L}os-Tarski preservation theorem} and the \emph{homomorphism preservation theorem} (cf.~\cite{hodges1992model}) can both be derived from the Otto interpolation theorem. Similarly, Feferman's preservation theorems characterizing the \FO formulas that are \emph{persistent for outer extensions} and the \FO formulas that are \emph{invariant for outer extensions} (a.k.a.~the formulas \emph{preserved by}, respectively, \emph{invariant for generated substructures})~\cite{Feferman1968outer,Areces2001hybrid} follow from the access interpolation theorem.
\subsection{The Beth Definability Theorem}
\label{sec:beth}

We introduce the basic idea of the Beth definability theorem 
through an example.

\begin{example}
Consider the binary relation
``taller-than'' and the unary relation ``tallest''.
There are a number of reasonable axioms that one could write down, governing the relationship between these two concepts. These could include, for instance, axioms expressing that ``taller-than'' is a linear order, the axiom $\forall xy(\text{taller-than}(y,x)\to \neg\text{tallest}(x))$, or the axiom $\exists x(\text{tallest}(x))$. Let $\Sigma$ be the \FO theory consisting of these axioms. 
It can then be seen that 
$\text{tallest}$ is definable in terms of
$\text{taller-than}$, relative to $\Sigma$.
Indeed, \[\Sigma\models\forall x(\text{tallest}(x)\leftrightarrow \neg\exists y(\text{taller-than}(y,x)))\]
Using terminology that we will introduce below, 
we call the above an \emph{explicit definition
of $\text{tallest}$ in terms of $\text{taller-than}$ relative to $\Sigma$}.

What about the converse: is $\text{taller-than}$ explicitly definable in terms of $\text{tallest}$ relative to $\Sigma$? It seems not. We can give a formal argument as 
follows: consider the following pair of structures
\begin{itemize}
    \item $A$ is the structure with domain $\{a,b,c\}$ where $\text{taller-than}^A = \{(a,b), (b,c), (a,c)\}$ and $\text{tallest}^A=\{a\}$.
    \item $B$ is the structure with domain $\{a,b,c\}$ where $\text{taller-than}^B = \{(a,c), (c,b), (a,b)\}$ and $\text{tallest}^A=\{a\}$.
\end{itemize}
Both structures satisfy $\Sigma$. 
Furthermore, both structures have the same domain, and the
$\text{tallest}$ relation has the same denotation 
in both structures. If $\text{taller-than}$ were 
explicitly definable in terms of $\text{tallest}$, the
$\text{taller-than}$ relation would therefore
also have to have the same denotation in both structures. But this is not the case. Therefore,
we can conclude that $\text{taller-than}$ is \emph{not}
explicitly definable in terms of $\text{tallest}$,
relative to $\Sigma$. 
\lipicsEnd\end{example}

This technique for
showing non-definability (by providing a pair of structures that agree on their domain and on the  base relations but that disagree on the relation to be defined) is known as Padoa's method~\cite{padoa1901essai}. The Beth definability theorem~\cite{beth1953padoa}, intuitively, states that 
Padoa's method is a sound and complete method for proving non-definability: whenever a relation $R$ 
is \emph{not} definable in terms of relations
$S_1, \ldots, S_n$ relative to a first-order theory
$\Sigma$, then this can be shown using Padoa's method. Formally:

\begin{definition}
    Let $\Sigma$ be an \FO theory, let $R\in\sig(\Sigma)$ and let $\tau\subseteq \sig(\Sigma)$.
    \begin{enumerate}
    \item An \emph{explicit definition} of $R$ in terms of $\tau$ relative to $\Sigma$ is a 
    FO-formula $\phi(x_1, \ldots, x_n)$, with $n=arity(R)$, such that $\Sigma\models\forall x_1\ldots x_n(R(x_1, \ldots, x_n)\leftrightarrow \phi(x_1, \ldots, x_n))$.
    \item $\Sigma$ \emph{implicitly defines} $R$ in terms of $\tau$ if
    for all models $A$ and $B$ of $\Sigma$ having the same domain, $S^A=S^B$ for all $S\in\tau$ implies
    $R^A=R^B$.
    \end{enumerate}
\end{definition}

\begin{example}
    Consider the theory
    $\Sigma = \{\forall xy (V_n(x,y)\leftrightarrow \phi_n(x,y))\mid n>1\}$,
    where $\phi_n(x,y)$ expresses the existence of a directed $R$-path of length $n$ from $x$ to $y$. Clearly, $\Sigma\models V_4(x,y)\leftrightarrow \exists u \left( V_2(x,u)\land V_2(u,y) \right)$. In other words, $V_4$ is explicitly definable in terms of $V_2$ relative to $\Sigma$. On the other hand, 
    $V_4$ is not explicitly definable in terms of $V_3$. Following Padoa's method, this can be proved
    by showing that $\Sigma$ does not implicitly define $V_4$ in terms of $V_3$ (we omit the details). What is less obvious is that $V_5$ \emph{is} explicitly definable in terms of 
    $V_3$ and $V_4$. We leave it as an exercise to the reader to find the 
    explicit definition (cf.~\cite{Afrati2011} for a solution).
\lipicsEnd\end{example}

\begin{theorem}[Projective Beth definability]
        Let $\Sigma$ be an \FO theory,  $R\in\sig(\Sigma)$ and  $\tau\subseteq \sig(\Sigma)$. There is an explicit
        definition of $R$ in terms of $\tau$ relative to $\Sigma$ if and only if $\Sigma$  implicitly defines $R$ in terms of $\tau$.
\end{theorem}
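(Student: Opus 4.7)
The easy direction is immediate: if $\phi(\bar{x})$ is an explicit definition of $R$ in terms of $\tau$ relative to $\Sigma$, then in any two models of $\Sigma$ sharing a domain and agreeing on $\tau$ the $\phi$-definable set coincides, so $R$ receives the same interpretation. For the nontrivial direction I would reduce implicit definability to an instance of Craig interpolation by a standard renaming trick. Introduce a fresh relation symbol $R'$ of the same arity as $R$, and for each $S \in \sig(\Sigma)\setminus(\tau\cup\{R\})$ a fresh copy $S'$ of matching arity. Let $\Sigma'$ be the theory obtained from $\Sigma$ by simultaneously replacing $R$ with $R'$ and each such $S$ with its primed copy.

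The first key step is to show that
$$\Sigma\cup\Sigma' \;\models\; \forall\bar{x}\bigl(R(\bar{x})\leftrightarrow R'(\bar{x})\bigr).$$
Given any model $M$ of $\Sigma\cup\Sigma'$, one obtains two models of $\Sigma$ on the common domain $|M|$: the $\sig(\Sigma)$-reduct $M_1$ of $M$, and the structure $M_2$ whose $\tau$-reduct equals that of $M$ but which interprets $R$ and each $S\in\sig(\Sigma)\setminus(\tau\cup\{R\})$ via the primed symbols of $M$. These two models satisfy $\Sigma$, share a domain, and agree on $\tau$, so by implicit definability $R^{M_1}=R^{M_2}$, i.e.\ $R^M=(R')^M$.

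The second step combines compactness with Craig interpolation. Passing to fresh constants $\bar{c}$, the displayed entailment gives $\Sigma\cup\Sigma'\cup\{R(\bar{c})\}\models R'(\bar{c})$, so by compactness there is a finite $\Sigma_0\subseteq\Sigma$ (and its primed counterpart $\Sigma_0'\subseteq\Sigma'$) with
$$\models \Bigl(\textstyle\bigwedge\Sigma_0 \land R(\bar{c})\Bigr) \to \Bigl(\textstyle\bigwedge\Sigma_0'\to R'(\bar{c})\Bigr).$$
By construction the shared signature of the antecedent and consequent is exactly $\tau\cup\{\bar{c}\}$: the symbol $R$ and every auxiliary $S\in\sig(\Sigma)\setminus(\tau\cup\{R\})$ appear only on the left, while $R'$ and the primed $S'$ appear only on the right. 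A Craig interpolant $\vartheta(\bar{c})$ therefore satisfies $\sig(\vartheta)\subseteq\tau$; reading $\bar{c}$ back as variables $\bar{x}$ gives $\Sigma\models R(\bar{x})\to\vartheta(\bar{x})$ from the left interpolation condition and, after renaming primed symbols back to their originals (which preserves validity and does not touch $\vartheta$), $\Sigma\models\vartheta(\bar{x})\to R(\bar{x})$ from the right. Hence $\vartheta$ is the desired explicit definition.

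The only real subtlety is the renaming of the auxiliary symbols in $\sig(\Sigma)\setminus(\tau\cup\{R\})$. Without this step the shared signature of the interpolation instance would include those auxiliaries, and Craig interpolation would produce a formula mentioning them rather than one in the signature $\tau$. Everything else is bookkeeping; the argument transfers verbatim to signatures with function and constant symbols, and one also obtains refinements (e.g., a Lyndon-style control on the polarity of symbols in $\vartheta$) by invoking the corresponding strengthening of the interpolation theorem.
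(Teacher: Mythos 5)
Your proof is correct and follows essentially the same route as the paper: duplicate the theory with primed copies of the symbols, use implicit definability to derive an entailment from $R$ to $R'$, apply compactness, extract a Craig interpolant over the shared signature $\tau$, and rename the primed symbols back. The only differences are cosmetic bookkeeping — you rename just the symbols outside $\tau\cup\{R\}$ (so no explicit agreement axioms $\forall\mathbf{x}(S\mathbf{x}\leftrightarrow S'\mathbf{x})$ are needed, and you spell out the two-model argument the paper leaves implicit) and you use fresh constants in place of free variables for the tuple.
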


\begin{proof}
We will only discuss the more interesting right-to-left direction. To simplify the exposition, we will restrict attention to relational signatures.
Assume
$\Sigma$  implicitly defines $R$ in terms of $\tau$.
For each $S\in\sig(\Sigma)$, let $S'$ be a fresh relation symbol of the same arity as $S$.
Let $\Sigma'$ be the theory obtained by replacing each relation symbol in $\Sigma$
 by its primed copy. From 
the fact that $\Sigma$ implicitly defines $R$ in terms of $\tau$, it follows that 
$$\Sigma \cup \Sigma' \models \left( \bigwedge_{S\in\tau} \forall\textbf{x} \left( S\textbf{x} \leftrightarrow S'\textbf{x} \right) \to \left( R\textbf{y}\to R'\textbf{y} \right) \right)$$
By compactness, there exist finite $\Sigma_0\subseteq \Sigma$ and $\Sigma_0' \subseteq \Sigma'$ such that
$$\Sigma_0 \cup \Sigma_0' \models \left( \bigwedge_{S\in\tau} \forall\textbf{x} \left( S\textbf{x} \leftrightarrow S'\textbf{x} \right) \right) \to \left( R\textbf{y}\to R'\textbf{y} \right)$$
Rearranging the formulas a bit, we obtain
\[\models \left( \left( \bigwedge\Sigma_0 \right) \land R\textbf{y} \right) \to \left( \left( \bigwedge\Sigma_0' \right) \land \left( \bigwedge_{S\in\tau} \forall\textbf{x} \left( S\textbf{x}\leftrightarrow S'\textbf{x} \right) \right) \to R'\textbf{y} \right) \]
 Let $\vartheta$ be any Craig interpolant. 
Observe that $\sig(\vartheta)\subseteq\tau$.
It is not difficult to show that 
\[\Sigma\models\forall\textbf{y}(R\textbf{y}\to \vartheta(\textbf{y}))\]
and
\[\Sigma'\cup\{\forall\textbf{x}(S\textbf{x}\leftrightarrow S'\textbf{x})\mid S\in\tau\}\models\forall\textbf{y}( \vartheta(\textbf{y})\to R'\textbf{y})\]
Since first-order validity is preserved when we uniformly substitute a relation symbol by another relation symbol, it follows (by replacing all primed relation symbols by their unprimed version) that
$\Sigma\models\forall\textbf{y}( \vartheta(\textbf{y})\to R\textbf{y})$,
and therefore
\[\Sigma\models\forall\textbf{y}(R\textbf{y}\leftrightarrow \vartheta(\textbf{y}))\]
In other words, $\vartheta$ is an 
explicit definition for $R$ in terms of $\tau$
relative to $\Sigma$.
\end{proof}

In the literature, this is known as the 
\emph{projective} Beth definability theorem.
The non-projective Beth definability theorem (as originally proved in~\cite{beth1953padoa}) is the 
special case where $\tau=\sig(\Sigma)\setminus \{R\}$.

\begin{example}[Order-invariant \FOord = \FO]
\label{ex:order-invariant}
    This example shows the projective Beth definability theorem in action by using it as a stepping stone towards a further application.
    Fix  a signature $\tau$. We say that an \FO sentence $\phi$
    over $\tau\cup\{<\}$ is \emph{order-invariant} if
    for all $\tau$-structures $A$ and for all linear orders $<_1,<_2$ over
    the domain of $A$, it holds that $(A,<_1)\models\phi$ iff $(A,<_2)\models\phi$. If 
    $\phi$ is an order-invariant \FO sentence over $\tau\cup\{<\}$ and $A$ is a $\tau$-structure,
    we may simply write $A\models\phi$ to denote that $(A,<)\models\phi$ for an arbitrary linear order $<$ (because the choice of linear does not matter). By \emph{order-invariant first-order logic} (notation: \FOordinv) over a signature $\tau$ we mean the collection of order-invariant \FO sentences over the signature $\tau \cup \{<\}$. \FOordinv can be viewed as a language for describing $\tau$-structures. From this perspective,
    \FO is a fragment of  \FOordinv, because, if a 
    \FO sentence does not contain any occurrences of $<$, it is trivially order-invariant. It is conceivable, however, that 
    \FOordinv would be strictly more expressive than \FO.
    Indeed, this turns out to be the case \emph{in the finite}: there are 
    \FOord sentences that are order-invariant over finite structures, but that are not equivalent, over finite structures, to an \FO sentence in the signature without the order (cf.~\cite{Schweikardt2013:order})
    Over arbitrary (finite and infinite) structures, however, $\FOordinv$ has the same expressive power as $\FO$, and the
    projective Beth definability theorem can be used to prove this, as we will explain in a moment.
    
\begin{excursus}
A beautiful example of the expressive power of $\FOordinv$ on finite structures is given in~\cite{Potthoff94} (cf.~also~\cite{Schweikardt2013:order}): consider  \emph{finite binary trees} as $\{E, D\}$-structures where $E$ is the edge relation and $D$ is the descendant relation, i.e., the transitive closure of $E$. Let us furthermore restrict to \emph{perfect} finite binary trees, meaning that every non-leaf node has exactly two children, and that all leaves have equal distance from the root.

There is an order-invariant $\FO(E, D, <)$-sentence $\phi$ such that for every such perfect finite binary tree $T$, we have that $T \models\phi$ iff every leaf of $T$ has an even distance from the root. It can be shown using standard techniques (e.g.,  Ehrenfeucht-Fra\"{i}sse games) that there is no $\FO(E, D)$-sentence $\psi$ with the same property. We first define the shorthand notations
\begin{align*}
\leaf(z) &:= \neg\exists y E(x,y), ~\text{and} \\
\firstchild(x) &:= \neg\exists yz(E(y,x)\land E(y,z)\land z<x), ~\text{and} \\
\secondchild(x) &:= \exists yz(E(y,x)\land E(y,z)\land z<x).
\end{align*}
The formulas defining $\firstchild(x)$ and $\secondchild(x)$ use the ordering on the structure to label each non-root node as the first or second child of its parent. The core idea behind the construction is expressed by the formula
\begin{align*}
\exists z ( \leaf\left(z\right) &\land \firstchild\left(z\right) \land \\
&\forall xy\left(E\left(x,y\right)\land D\left(y,z\right)\to \left(\firstchild\left(y\right)\leftrightarrow \neg\firstchild\left(x\right)\right)\right)),
\end{align*}
which expresses the existence of a leaf node $n$, such that the path from the root to
$n$ is of the form as described by the regular expression ``$(\firstchild\cdot\secondchild)^*\cdot \firstchild$''. It follows that the above formula is true in perfect trees of even depth and false in perfect trees of odd depth. This is not a complete argument for $\FO\subsetneq\FOordinv$: more work is
needed to ensure that the formula is order-invariant over all structures (not only over perfect binary trees). Nevertheless, we hope that
the above explanation gives the basic intuition.
\end{excursus}

    As we will now show, it follows from the projective Beth definability theorem that, over arbitrary (i.e., finite and infinite) structures,
    every \FOordinv sentence is equivalent to an \FO sentence.
    Indeed, fix an arbitrary \FO sentence $\varphi$ over the signature $\tau \cup \{<\}$, and let $\Sigma$ be the \FO theory over the signature $\tau\cup\{<,Z\}$ expressing that $<$ is a linear order and that $Z\leftrightarrow\phi$, where $Z$ is a $0$-ary relation symbol not occurring in $\tau$. Then $\phi$ is order-invariant if and only if all structures with the same domain which agree on relations in $\tau$ (but not necessarily the order $<$) also agree on $Z$. In other words, $\phi$ is order invariant precisely when $\Sigma$ implicitly defines $Z$ in terms of $\tau$.
    When this holds, by the projective Beth definability theorem, there is an
    explicit definition of $Z$ in terms of $\tau$, i.e., $\Sigma\models Z\leftrightarrow \psi$ for some FO sentence $\psi$ over the original signature $\tau$ (i.e., without the linear order). It follows that $\phi$ and $\psi$ are equivalent in the sense that 
    both define the same class of $\tau$-structures.
    \lipicsEnd\end{example}

    There is much more to say about Beth definability. In particular,
    \refchapter{chapter:philosophy} 
    discusses relevant connections to various topics in philosophy. 
    
\subsection{Combining Knowledge Bases}

In the field of \emph{knowledge representation}, one important set of problems that have been studied in depth pertains to 
\emph{knowledge-base management}. This set of problems includes
\emph{combining} knowledge bases that have partially overlapping domains, as well as, for instance, \emph{restricting} a knowledge base to a  subdomain. Many of these tasks turn out to be intimately related to variants of Craig interpolation or uniform interpolation. See~\refchapter{chapter:kr} for an in-depth exposition of these connections.
Here, we will merely illustrate this through one basic result. 
In general, a knowledge base can be thought of as a theory specified in a suitable fragment of \FO logic. 
For simplicity, let us consider, here, a knowledge base to simply be an \FO theory. Then the following
theorem, which follows immediately from the Craig interpolation theorem, tells us that two knowledge bases can be safely combined with each other, as long as they do not contradict each other on statements in their common signature.

\begin{theorem}[Robinson joint consistency~\cite{Robinson1960}]
For all \FO theories $\Sigma_1$ and $\Sigma_2$, the following are equivalent:
\begin{enumerate}
    \item $\Sigma_1\cup\Sigma_2$ is consistent,
    \item There is no \FO sentence $\phi$ with
$\sig(\phi)\subseteq\sig(\Sigma_1)\cap\sig(\Sigma_2)$,
such 
that $\Sigma_1\models\phi$ and $\Sigma_2\models\neg\phi$.
\end{enumerate}
\end{theorem}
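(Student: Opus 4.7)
The plan is to prove the equivalence in two directions, with the direction \textit{(2)$\Rightarrow$(1)} being the substantive one that calls on the Craig interpolation theorem, and the direction \textit{(1)$\Rightarrow$(2)} being essentially a matter of unpacking definitions. I will treat the latter direction first as a warm-up: if $\Sigma_1\cup\Sigma_2$ were consistent and witnessed by a model $A$, then for any candidate separating sentence $\phi$ we would have $A\models\phi$ (because $A\models\Sigma_1$) and $A\models\neg\phi$ (because $A\models\Sigma_2$), which is absurd. So I will establish this direction in one line.

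For the interesting direction, I will argue by contraposition: assume $\Sigma_1\cup\Sigma_2$ is inconsistent and build the required sentence $\phi$. The first step is to reduce to a finitary situation using the compactness theorem for \FO: there exist finite $\Sigma_1^0\subseteq \Sigma_1$ and $\Sigma_2^0\subseteq \Sigma_2$ whose union is inconsistent. Setting $\alpha_1 := \bigwedge\Sigma_1^0$ and $\alpha_2:=\bigwedge\Sigma_2^0$, both $\alpha_1$ and $\alpha_2$ are \FO sentences, and the inconsistency of their conjunction can be rewritten as the valid implication $\models \alpha_1\to\neg\alpha_2$.

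Now I apply the Craig interpolation theorem to this valid implication, which yields an \FO formula $\vartheta$ with $\models\alpha_1\to\vartheta$ and $\models\vartheta\to\neg\alpha_2$, together with the signature condition $\sig(\vartheta)\subseteq \sig(\alpha_1)\cap\sig(\alpha_2)$, and the free-variable condition $\fvar(\vartheta)\subseteq\fvar(\alpha_1)\cap\fvar(\alpha_2)=\emptyset$ (so $\vartheta$ is in fact a sentence). From $\sig(\alpha_i)\subseteq \sig(\Sigma_i)$ for $i=1,2$, we obtain $\sig(\vartheta)\subseteq\sig(\Sigma_1)\cap\sig(\Sigma_2)$. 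Finally, since $\Sigma_1\models\alpha_1$ and $\Sigma_2\models\alpha_2$, the two implications above give $\Sigma_1\models\vartheta$ and $\Sigma_2\models\neg\vartheta$, so $\vartheta$ is the desired separating sentence, establishing the negation of (2).

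The proof is essentially a short composition of three standard ingredients (compactness, rewriting inconsistency as an implication, and Craig interpolation), so there is no serious obstacle; the only small point to be careful about is the passage from theories to single sentences, which is why the compactness step is essential, and verifying that the interpolant produced is a sentence (which follows from the free-variable clause of Craig's theorem, given that we started from sentences).
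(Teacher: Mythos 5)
Your proposal is correct and follows essentially the same route as the paper's proof: compactness to pass to finite subtheories, rewriting inconsistency as the valid implication $\models(\bigwedge\Sigma'_1)\to\neg(\bigwedge\Sigma'_2)$, and taking a Craig interpolant as the separating sentence. You simply spell out the signature and free-variable bookkeeping (ensuring the interpolant is a sentence over the shared signature) that the paper leaves implicit.
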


\begin{proof}
    The direction from 1 to 2 is immediate. 
    For the converse direction, 
    if $\Sigma_1\cup\Sigma_2$ is inconsistent,
    then, by compactness, there are finite
    subsets $\Sigma'_1\subseteq\Sigma_1$ and $\Sigma'_2\subseteq\Sigma_2$ such that
    $\Sigma'_1\cup\Sigma'_2$ is inconsistent.
    It follows that
    \[\models (\bigwedge \Sigma'_1)\to\neg(\bigwedge\Sigma'_2). \]
    and for any Craig interpolant $\varphi$ for this valid \FO implication, we have that $\Sigma_1\models\phi$ and $\Sigma_2\models\neg\phi$.
\end{proof}

\subsection{Formal Verification}

As will be discussed in depth in~\refchapter{chapter:verification}, variants of Craig interpolation have been fruitfully applied in the area of \emph{formal verification}. We illustrate this use case here by means of a simple example.

\begin{example}
\label{ex:verification}
Consider the following computer program.
\begin{quote}\tt
function f(int x): int \{ \\
\smash{~} // Input: an integer that is assumed to be divisible by 3 \\
\smash{~} // Output: an integer that is divisible by 3. \\
\\
\smash{~}  // assert: x is divisible by 3 \\
\smash{~}  x := x+1; \\
\smash{~}  // assert: \fbox{?} \\
\smash{~}  x := x+2; \\
\smash{~}  // assert: x is divisible by 3 \\
\smash{~}  return x; \\
\}
\end{quote}
If the initial assertion holds before the program execution then the final assertion holds afterwards. In order to prove this, we would insert an intermediate assertion in the spot marked by \fbox{?} and show that each instruction preserves the truth of the surrounding assertions (that is, the pre-condition of each instruction implies that the post-conditions holds after executing the instruction). What would be the right assertion to put in place of the \fbox{?}? The answer is  obvious, namely: \emph{$x$ is congruent to 1 modulo 3}.
However, one would like to have a method that allows us to find this answer automatically. 
As it turns out, Craig interpolation provides such a method.

Clearly, the reasoning here requires some arithmetic. We can think of it as reasoning relative to a first-order theory such as Peano arithmetic. In fact, for this example it suffices to use Presburger arithmetic (\PrA), i.e., the
first-order theory of the integers with addition and without multiplication. Note that 
\emph{x is divisible by 3} can be expressed in the language of \PrA 
as $\exists u(x=u+u+u)$ (which we will abbreviate, in what follows, as $\exists u(x=3u)$.
Next, in order to make the reasoning more precise, let us introduce 
variables $x_0, x_1, x_2$ to distinguish the values of $x$ at each stage of the computation. We can then formalize the correctness claim that we wish to prove as follows:
$$\PrA \models \left(\exists u\left(x_0 = 3u\right) \land x_1=x_0+1 \land x_2=x_1+2\right) \to \exists v \left(x_2=3v\right)$$
Rearranging the formula a bit, we get:
\begin{align*}
\models &\left(\left(\bigwedge \PrA\right)\land \left(\exists u\left(x_0 = 3u\right) \land x_1=x_0+1\right)\right) \to \left(x_2=x_1+2 \to \exists v \left(x_2=3v\right)\right)
\end{align*}
We can now see that the only shared free variable between the antecedent and the consequent is $x_1$. Therefore, there
should exist some interpolant with $x_1$ as its only free variable. Indeed, one such
interpolant is
$\exists u (x_1=3u+1)$.
It follows that
we can put $\exists u(x =3u+1)$ for the \fbox{?}. In fact,
we can even simplify this assertion a bit further, using the fact
that Presburger arithmetic admits quantifier elimination. This finally leads us to the solution we wanted, namely the assertion that $x$ is congruent to 1 modulo 3.
See also Section~\ref{sec:theories} where we consider interpolation 
relative to a FO theory in more detail. 
\lipicsEnd\end{example}

Similar techniques can be fruitfully applied for the formal verification of more complicated code involving loops or recursion (see~\refchapter{chapter:verification}).
\section{Interpolation for Decidable Fragments of First-Order Logic}
\label{sec:frag}

In this section, we discuss the situation regarding Craig interpolation for a number of important decidable fragment of first-order logic. See~\refchapter{chapter:modeltheory} for a similar overview of results about interpolation for extensions of first-order logic. This section is largely based on~\cite{FOSSACS24}. 

\subsection{Decidable Fragments}

The study of decidable fragments of \FO is a topic with a long history, dating back to the early 1900s (\cite{Lowenheim1915,Skolem1920}, cf.~also~\cite{Borger1997:classic}), and more actively pursued since the 1990s. Inspired by Vardi~\cite{Vardi1996:why}, who asked ``what makes modal logic so robustly decidable?'' and Andreka et al.~\cite{Andreka1998:Modal}, who asked ``what makes modal logic tick?'' many decidable fragments have been introduced and studied over the last 25 years that take inspiration from modal logic (\ML), which itself can be viewed as a fragment of \FO that features a restricted form of quantification. These include the following fragments, each of which naturally generalizes modal logic in a different way: the \emph{two-variable fragment} (\FOtwo)~\cite{Mortimer1975:languages}, the \emph{guarded fragment} (\GFO)~\cite{Andreka1998:Modal}, and the \emph{unary negation fragment} (\UNFO)~\cite{tencate2013:unary}. Further decidable extensions of these fragments were subsequently identified, including the \emph{two-variable fragment with counting quantifiers} (\Ctwo)~\cite{Graedel97:two} and the \emph{guarded negation fragment} (\GNFO)~\cite{Barany2015:guarded}. The latter can be viewed as a common generalization of \GFO and \UNFO. Many decidable logics used in computer science and AI, including various description logics and rule-based languages, can be translated into \GNFO and/or \Ctwo. In this sense, \GNFO and \Ctwo are convenient tools for explaining the decidability of other logics. Extensions of \GNFO have been studied that push the decidability frontier even further (for instance with fixed-point operators \cite{benedikt2016step, Benedikt2019:definability} and using clique-guards), but these fall outside the scope of this paper.

In an earlier line of investigation, Quine identified the decidable \emph{fluted fragment} (\FL) \cite{quine1969}, the first of several \emph{ordered logics} which have been the subject of recent interest \cite{purdy1996-1,purdy1996-2,purdy1999,purdy2002,pratt-hartmann2019}. The idea behind ordered logics is to restrict the order in which variables are allowed to occur in atomic formulas and quantifiers. Another recently introduced decidable fragment that falls in this family is the \emph{forward fragment} (\FF), whose syntax strictly generalizes that of \FL. Both \FL and \FF have the finite model property 
(FMP) \cite{pratt-hartmann2019} and embed \ML \cite{hustadt2004}, but are incomparable in expressive power to \GFO \cite{pratt-hartmann2016}, \FOtwo, and \UNFO. The recently-introduced \emph{adjacent fragment} (\AF) generalizes both \FF and \FL while also retaining decidability~\cite{bednarczyk2023limits}.

\begin{example}
The \FO-sentence $\exists x y (R(x,y) \land R(y,x))$ belongs to \GFO, \FOtwo and \UNFO, but is not expressible in \FF, since the structure consisting of two points with symmetric edges and the structure $(\mathbb{Z},S)$ with $S$ the successor relation, are ``infix bisimilar'' (see \cite{bednarczyk2022}).
\end{example}

\begin{figure}[t]\centering
\vspace{-2mm}
\newcommand{\yes}[1][\!\!\!\!\!]{\includegraphics[scale=.02]{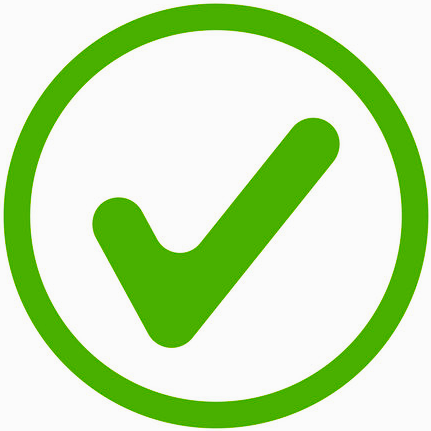}#1}
\newcommand{\no}[1][\!\!\!\!\!]{\includegraphics[scale=.02]{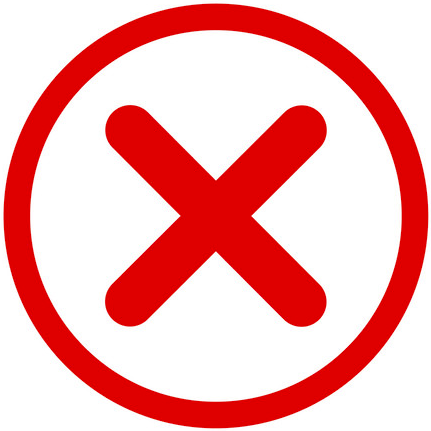}#1}
\begin{tikzpicture}
  [align=center,node distance=1.4cm, every node/.style={scale=1}]
    \node (fo) at (0,0) {\FO \yes};
    \node [below left of=fo] (ctwo) {\Ctwo \no};
    \node [below right of=fo] (gnfo) {\GNFO \yes};
    \node [below left  of=gnfo] (gfo)  {\GFO \no};
    \node [below right of=gnfo] (unfo) {\UNFO \yes};
    \node [below left of=ctwo] (twovar) {\FOtwo \no}; 
    \node [below right of= gfo] (ml) {Modal Logic \yes};
    \node [right of=gnfo] (ff) {FF \no}; 
    \node [right of=unfo] (fl) {FL \no}; 
    \draw [thick, shorten <=-2pt, shorten >=-2pt] (fo) -- (gnfo);
    \draw [thick, shorten <=-2pt, shorten >=-2pt] (fo) -- (ctwo);
    \draw [thick, shorten <=-2pt, shorten >=-2pt] (ctwo) -- (twovar);
    \draw [thick, shorten <=-2pt, shorten >=-2pt] (gnfo) -- node[below] {~~~(*)} (gfo);
    \draw [thick, shorten <=-2pt, shorten >=-2pt] (gnfo) -- (unfo);
    \draw [thick, shorten <=-2pt, shorten >=-2pt] (gfo) -- (ml);
    \draw [thick, shorten <=-2pt, shorten >=-2pt] (twovar) -- (ml);
    \draw [thick, shorten <=-2pt, shorten >=-2pt] (unfo) -- (ml);
      \draw [thick, shorten <=-2pt, shorten >=-2pt] (fo) -- (ff);
      \draw [thick, shorten <=-2pt, shorten >=-2pt] (ff) -- (fl);
      \draw [thick, shorten <=-2pt, shorten >=-2pt] (fl) -- (ml);
  
    \draw [dashed, thick] (-4,-1.5) -- 
      node[below, at start, sloped] {decidability} (2.5,0.1);
    \draw [dashed, thick] (-4,-2.4) -- node[below, at start, sloped] {finite model\\ property} (3,0.1);
\end{tikzpicture}

{
\scriptsize
\begin{tabular}[t]{l@{~~}l}
\FO & First-order logic\\
\FOtwo & Two-variable fragment \\
\Ctwo & Two-variable fragment  with counting \\
\GFO & Guarded fragment \\
\end{tabular} ~~
\begin{tabular}[t]{l@{~~}l}
\GNFO & Guarded-negation fragment \\
\UNFO & Unary-negation fragment \\
\FF & Forward fragment \\
\FL & Fluted fragment 
\end{tabular}
}

\caption{Landscape of decidable fragments of \FO with (\yes[]) and without (\no[]) CIP. The inclusion marked $(*)$ holds only for sentences and self-guarded formulas. Figured taken from~\cite{FOSSACS24}.}
\label{fig:fragments}

\end{figure}

Ideally, an \FO-fragment is not only decidable, but also model-theoretically well behaved. A particularly important model-theoretic property of logics is the \emph{Craig Interpolation Property} (CIP). It turns out that, although \GFO is in many ways model-theoretically well-behaved~\cite{Andreka1998:Modal}, it lacks CIP~\cite{Hoogland02:interpolation}. Likewise, \FOtwo lacks CIP~\cite{Comer1969:classes} and the same holds for \Ctwo  (\cite[Example 2]{Jung2021:living} yields a counterexample). Both \FF and \FL lack CIP~\cite{bednarczyk2022}. On the other hand, \UNFO and \GNFO have CIP \cite{tencate2013:unary,Benedikt2013:rewriting}. Figure~\ref{fig:fragments} summarizes these known results. Note that we restrict attention to relational signatures without constant symbols and function symbols. Some of the results depend on this restriction. Other known results not reflected in Figure~\ref{fig:fragments} (to avoid clutter) are that the intersection of \GFO and \FOtwo (also known as \GFOtwo) has CIP~\cite{Hoogland02:interpolation}. Similarly, the intersection of \FF with \GFO and the intersection of \FL with \GFO (known as \GFF and \GFL, respectively) have CIP~\cite{bednarczyk2022}.

Additional results are known for \UNFO and \GNFO beyond the fact that they have CIP. In particular, CIP holds for the fixed-point extension of \UNFO, while the weak ``modal'' form of Craig interpolation, mentioned above, holds for the fixed-point extension of \GNFO~\cite{Benedikt2015:interpolation,Benedikt2019:definability}. Furthermore, interpolants for \UNFO and \GNFO can be constructed effectively, and tight bounds are known on the size of interpolants and the computational complexity of computing them~\cite{Benedikt2015:effective}.

\subsection{Repairing Interpolation}
When a logic $L$ lacks CIP, the question naturally arises as to whether there exists a more expressive logic $L'$ that has CIP. If such an $L'$ exists, then, in particular, interpolants for valid $L$-implications can be found in $L'$. This line of analysis is sometimes referred to as \emph{Repairing Interpolation} \cite{Areces03:repairing}. Another alternative is to find a logic $L'$ which contains interpolants for all valid implications in $L$ (see the $\textrm{Craig}(L,L')$ property in~\refchapter{chapter:modeltheory} and in~\refchapter{chapter:separation}.

If $L'$ is an \FO-fragment, and our aim is to repair interpolation by extension, then there is a trivial solution: \FO itself is an extension of $L$ satisfying CIP. More interesting is the following question: can a natural extension $L'$ of $L$ be identified which satisfies CIP while retaining decidability? In~\cite{FOSSACS24}, this question is answered in the negative: \FO is the smallest extension of \FOtwo and of \FF satisfying CIP. With some simple assumptions regarding the effective computability of the translation between the extension and the logic that it extends, it follows that no extension of \FOtwo or \FF satisfying CIP is decidable. It is also shown that the smallest logic extending \GFO that has CIP is \GNFO.

These results give us a clear sense of where, in the larger landscape of decidable fragments of \FO, we may find logics that enjoy CIP. What makes the above results remarkable is that, from the definition of the Craig interpolation property, it doesn't appear to follow that a logic without CIP would have a unique minimal extension with CIP. Note that a valid implication may have many possible interpolants, and the Craig interpolation property merely requires the existence of one such interpolant.  Nevertheless, the above results show that, in the case \FOtwo, \GFO, and \FF, such a unique minimal extension indeed exists (assuming suitable closure properties). It is worth noting that, in the case of these logics, their $\Delta$-closure (which is the smallest logic $\Delta(L)$ extending a logic $L$ and satisfying the $\Delta$-interpolation property) already satisfies the stronger Craig interpolation property, and hence is also a minimal extension satisfying Craig interpolation. For more on this, see \refchapter{chapter:modeltheory}.

The aforementioned line of work can be viewed as an instance of \emph{abstract model theory} for fragments of \FO. One large driving force behind the development of abstract model theory was the identification of \emph{extensions} of \FO which satisfy desirable model-theoretic properties, such as the compactness theorem, the L\"owenheim-Skolem, and Craig interpolation. One takeaway from this line of research is that CIP is scarce among many ``reasonable'' \FO-extensions. For an overview of results regarding interpolation for extensions of \FO, see~\refchapter{chapter:modeltheory}.

Some prior work exists that studies CIP for (extensions of) fragments of \FO using the lens of abstract model theory. It was shown in~\cite{tencate2005:interpolation} that the smallest extension of modal logic with the difference operator (\MLD) which satisfies CIP is full first-order logic. Additionally, in~\cite{GheerbrantCSL09}, the authors identified minimal extensions of various fragments of propositional linear temporal logic with CIP.  Repairing interpolation has also been pursued in the context of quantified modal logics, which typically lack CIP; in \cite{Areces03:repairing}, the authors showed that CIP can be repaired for such logics by adding nominals, @-operators and the $\downarrow$-binder.

\subsection{Alternative Approaches}

Several other approaches have been proposed for dealing with logics that lack CIP. One approach is to weaken CIP. For example, it was shown in \cite{Hoogland02:interpolation} that \GFO satisfies a weak, ``modal'' form of Craig interpolation, where, roughly speaking, only the relation symbols that occur in non-guard positions in the interpolant are required to occur both in the premise and the conclusion. As it turns out, this weakening of CIP is strong enough to entail the (non-projective) \emph{Beth Definability Property}, which is one important use case of CIP. We have discussed such weakenings of CIP in Sections~\ref{sec:delta} and~\ref{sec:beth}.

Another recent approach~\cite{Jung2021:living} is to develop algorithms for testing whether an interpolant exists for a given entailment. That is, rather than viewing Craig interpolation as a property of logics, the existence of interpolants is studied as an algorithmic problem at the level of individual entailments. The interpolant existence problem turns out to be indeed decidable (although of higher complexity than the satisfiability problem) for both \GFO and \FOtwo~\cite{Jung2021:living}. See~\refchapter{chapter:separation} for more details.

\section{Interpolation Relative to a Theory}
\label{sec:theories}

Example~\ref{ex:verification} involved Craig interpolation under an \FO theory. In this section
we briefly explore this in a bit more detail.

\begin{definition}[Craig interpolant under a theory]
    Let $\Sigma$ be an \FO theory
    and $\phi,\psi$ \FO formulas such that $\Sigma\models\phi\to\psi$.
    \begin{enumerate}
        \item
    A \emph{weak Craig interpolant for $\phi\to\psi$ under $\Sigma$} is a 
    \FO formula $\vartheta$ such that
        $\Sigma\models\phi\to\vartheta$, 
    $\Sigma\models\vartheta\to\psi$ and 
    $\sig(\vartheta)\subseteq (\sig(\phi)\cap \sig(\psi))\cup \sig(\Sigma)$ and $\fvar(\vartheta)\subseteq \fvar(\phi)\cap \fvar(\psi)$. 
    \item If, furthermore,
    $\sig(\vartheta)\subseteq \sig(\phi)\cap \sig(\psi)$, then we say that $\vartheta$ is a \emph{strong Craig interpolant for $\phi\to\psi$ under $\Sigma$}.
    \end{enumerate}
\end{definition}

Intuitively, weak Craig interpolants can contain non-logical symbols from the ambient theory $\Sigma$, even if they don't occur in $\phi$ or $\psi$, while strong Craig interpolants are strictly limited to using those symbols in both $\phi$ and $\psi$. As the next example shows, strong Craig interpolants do not always exists.

\begin{example}
    Let $\PA$ be the theory of Peano arithmetic (which is a theory of the natural numbers with addition, multiplication, and order). Let
    $\phi = \exists x(Px\land Qx\land \exists y(x=y+1))$
    and let $\psi = \exists x(Qx\land (x> 0 \lor R(x))$.
    Then 
    $\PA\models \phi\to\psi$.
    A weak Craig interpolant in this case is 
    $\exists x(Qx\land x> 0)$. Since $>$ is not a shared symbol of $\phi$ and $\psi$, this is not a strong Craig interpolant. 
    Indeed, it can easily be seen that there is no strong Craig interpolant in this case. Intuitively, the antecedent uses the symbols + and 1 to force that $x$ is positive, while the consequent uses the symbols $>$ and $0$ to express that $x$ is positive. A Craig interpolant can be found using either of these but cannot be constructed without using \emph{any} arithmetical symbols.
\lipicsEnd\end{example}

The next theorem shows that weak Craig interpolants always exist, and provides a sufficient condition for the existence of strong Craig interpolants. Let us say that an \FO theory $\Sigma$ is \emph{$(\sigma,\tau)$-splittable}
if it can be decomposed as $\Sigma=\Sigma_1\cup\Sigma_2$ with $(\sig(\Sigma_1)\cup\sigma)\cap (\sig(\Sigma_2)\cup\tau)\subseteq\sigma\cap\tau$.
For instance, it is not hard to see that
if each formula in $\Sigma$ involves only one symbol (i.e., if $|\sig(\phi)|=1$ for each 
$\phi\in\Sigma$), 
then $\Sigma$ is $(\sigma,\tau)$-splittable
for all choices of $\sigma$ and $\tau$.

\begin{theorem}
    Let $\Sigma$ be a \FO theory
    and $\phi,\psi$ \FO formulas such that $\Sigma\models\phi\to\psi$.
    \begin{enumerate}
        \item  There is a weak Craig interpolant for $\phi\to\psi$ under $\Sigma$.
    \item If $\Sigma$ is $(\sig(\phi),\sig(\psi))$-splittable, then there is a strong Craig interpolant for $\phi\to\psi$ under $\Sigma$.
    \end{enumerate}
\end{theorem}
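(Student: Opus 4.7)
The plan is to derive both parts from the Craig interpolation theorem for \FO together with compactness, by rearranging the statement $\Sigma \models \phi \to \psi$ so that the theory axioms appear on both sides of a valid \FO implication. The key trick is that if a signature $\rho$ appears in both the antecedent and the consequent, then by distributivity $(\sig(\Sigma_0) \cup \sig(\phi)) \cap (\sig(\Sigma_0) \cup \sig(\psi)) = \sig(\Sigma_0) \cup (\sig(\phi) \cap \sig(\psi))$, which is exactly the signature constraint we want for a weak interpolant.

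For part 1, I would first use compactness: from $\Sigma \models \phi \to \psi$ we obtain a finite subset $\Sigma_0 \subseteq \Sigma$ with $\Sigma_0 \models \phi \to \psi$, equivalently $\models \bigl((\bigwedge \Sigma_0) \land \phi\bigr) \to \bigl((\bigwedge \Sigma_0) \to \psi\bigr)$. Applying Craig interpolation to this valid \FO implication yields a formula $\vartheta$ whose signature lies in $\sig(\Sigma_0) \cup (\sig(\phi) \cap \sig(\psi)) \subseteq \sig(\Sigma) \cup (\sig(\phi) \cap \sig(\psi))$. Since $\Sigma_0$ consists of sentences, $\vartheta$'s free variables are confined to $\fvar(\phi) \cap \fvar(\psi)$. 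Finally, from $\models \bigl((\bigwedge\Sigma_0) \land \phi\bigr) \to \vartheta$ we get $\Sigma \models \phi \to \vartheta$, and from $\models \vartheta \to \bigl((\bigwedge\Sigma_0) \to \psi\bigr)$ together with $\Sigma \models \bigwedge \Sigma_0$ we get $\Sigma \models \vartheta \to \psi$, so $\vartheta$ is a weak Craig interpolant.

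For part 2, I would refine the same strategy using the splitting. Write $\Sigma = \Sigma_1 \cup \Sigma_2$ with $(\sig(\Sigma_1) \cup \sig(\phi)) \cap (\sig(\Sigma_2) \cup \sig(\psi)) \subseteq \sig(\phi) \cap \sig(\psi)$. Apply compactness to each side to obtain finite $\Sigma_1' \subseteq \Sigma_1$ and $\Sigma_2' \subseteq \Sigma_2$ with $\Sigma_1' \cup \Sigma_2' \models \phi \to \psi$, i.e.\ $\models \bigl((\bigwedge \Sigma_1') \land \phi\bigr) \to \bigl((\bigwedge \Sigma_2') \to \psi\bigr)$. Now Craig interpolation gives a $\vartheta$ whose signature is contained in $(\sig(\Sigma_1) \cup \sig(\phi)) \cap (\sig(\Sigma_2) \cup \sig(\psi))$, which by the splittability hypothesis is contained in $\sig(\phi) \cap \sig(\psi)$. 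The two implications involving $\vartheta$ translate, as in part 1, to $\Sigma \models \phi \to \vartheta$ and $\Sigma \models \vartheta \to \psi$, making $\vartheta$ a strong Craig interpolant under $\Sigma$.

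I don't anticipate any serious obstacle: the whole proof is really just the compactness-plus-Craig template, and the only point requiring care is choosing where to place the theory axioms in the rearrangement so that the signature of the resulting Craig interpolant lands in the right place. The splittability condition in part 2 is precisely what makes the corresponding arithmetic on signatures work out; the example with Peano arithmetic immediately preceding the theorem illustrates why, without such a syntactic separation between the axioms that interact with $\phi$ and those that interact with $\psi$, one cannot in general squeeze the interpolant's signature down to $\sig(\phi) \cap \sig(\psi)$.
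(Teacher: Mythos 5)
Your proposal is correct and follows essentially the same route as the paper: compactness to reduce to a finite (sub)theory, then Craig interpolation applied to a rearranged valid implication, with the splitting $\Sigma_1\cup\Sigma_2$ placed on the two sides in part 2 so that the signature bound $(\sig(\Sigma_1)\cup\sig(\phi))\cap(\sig(\Sigma_2)\cup\sig(\psi))\subseteq\sig(\phi)\cap\sig(\psi)$ does the work. The only (immaterial) difference is in part 1, where you place $\bigwedge\Sigma_0$ on both sides of the implication while the paper keeps it only in the antecedent; either placement yields a signature contained in $(\sig(\phi)\cap\sig(\psi))\cup\sig(\Sigma)$, as required for a weak interpolant.
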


\begin{proof}
    1. By compactness, we may assume without loss of generality that 
    $\Sigma$ is finite. If $\Sigma\models \phi\to\psi$, then
    $\models ((\bigwedge\Sigma)\land \phi)\to\psi$.
    Let $\vartheta$ be a Craig interpolant for this valid implication. Then, 
    clearly, $\Sigma\models\phi\to\vartheta$ and 
    $\Sigma\models\vartheta\to\psi$. Furthermore,
    $\sig(\vartheta)\subseteq (\sig(\phi)\cup \sig(\Sigma))\cap \sig(\psi)
    \subseteq (\sig(\phi)\cap \sig(\psi))\cup \sig(\Sigma)$.

    2. By compactness, we may assume without loss of generality that 
    $\Sigma$ is finite.
    If $\Sigma\models \phi\to\psi$, then
    $\models ((\bigwedge\Sigma_1)\land \phi)\to ((\bigwedge\Sigma_2)\to\psi)$,
    where $\Sigma_1 \cup \Sigma_2$ is a $(\sig(\phi),\sig(\psi))$-splitting of $\Sigma$.
    Let $\vartheta$ be a Craig interpolant for this valid implication. Then, 
    clearly, $\Sigma\models\phi\to\vartheta$ and 
    $\Sigma\models\vartheta\to\psi$. Furthermore,
    $\sig(\vartheta)\subseteq (\sig(\phi)\cup \sig(\Sigma_1))\cap (\sig(\psi)
    \cup\sig(\Sigma_2)) = \sig(\phi)\cap \sig(\psi)$.
\end{proof}

An interesting question is whether restricted fragments of \FO allow weak Craig interpolation under
certain first-order theories. We will give
two examples of results on this:
\begin{enumerate}
    \item The \emph{modal} fragment of \FO has weak Craig interpolation under every
    first-order theory consisting of universal Horn sentences (or more generally, relative to any first-order theory that is closed under bisimulation products). In particular, this implies that many relevant modal logics, such as 
    K4 and S4, have Craig interpolation.
    See~\refchapter{chapter:modal} for more details.
    \item The \emph{quantifier-free} fragment of \FO has (a restricted form of) weak Craig interpolation under every
    first-order theory that admits quantifier elimination. See~\refchapter{chapter:algebra} for more details.
\end{enumerate}

Other examples can be found in knowledge representation (see~\refchapter{chapter:kr}). For instance, the concept
language $\mathcal{ALC}$ has weak Craig interpolation  under $\mathcal{ALC}$-TBoxes \cite{tenCate2006:definitorially}.

Similarly, one may ask when \FO itself admits weak Craig interpolation relative to a \emph{second-order} theory. For instance, the class of finite structures is definable by a (finite) second-order
theory (cf.~Example~\ref{ex:so}), and 
Theorem~\ref{thm:failure-in-the-finite}
implies that \FO lacks weak Craig interpolation 
relative to this second-order theory. Similarly,
the class of $\omega$-words over a finite alphabet is definable by a (finite) second-order theory,
and \FO over this class of structures (or, equivalently by Kamp's theorem~\cite{Kamp1968}, \emph{Linear Temporal Logic}), lacks Craig interpolation~\cite{PlaceZeitoun2016,GheerbrantCSL09} (see~also~\refchapter{chapter:separation}).
\section{Proof Methods and Algorithmic Aspects}
\label{sec:techniques}

There are two main styles of techniques for proving
interpolation theorems for \FO, namely \emph{model theoretic} and \emph{proof theoretic}. 
At a high level, model theoretic proofs proceed along the following lines. Suppose, for the sake
of a contradiction, that some valid
implication $\models\phi\to\psi$ does \emph{not}
have a suitable interpolant. For simplicity,
assume that $\phi$ and $\psi$ are sentences. Using compactness,
it can then be shown that there exist structures
$A\models\phi$ and $B\models\neg\psi$ such that
$A$ and $B$ have the same \FO-theory in the 
signature $\sig(\phi)\cap\sig(\psi)$. Next, 
$A$ and $B$ are then ``amalgamated'' into 
a single structure (typically using $\omega$-saturated ultrapowers) that simultaneously satisfies
$\phi$ and $\neg\psi$, a contradiction. Note
that such proofs are inherently \emph{non-constructive}: they only show that interpolants exist but do not provide a way to construct an interpolant. \refchapter{chapter:modeltheory} contains such a model-theoretic proof of Craig interpolation for first-order logic.

On the other hand, \emph{proof theoretic} techniques construct an interpolant for $\models\phi\to\psi$ from a proof of $\phi\to\psi$ in a suitable proof system. Below, we show how proofs in a semantic tableaux can be used to construct interpolants. A detailed discussion on constructing interpolant from sequent-calculus proofs can be found in \refchapter{chapter:prooftheory} and \refchapter{chapter:fixedpoint}, and a discussion of more practical algorithms for computing interpolants can be found in \refchapter{chapter:automated} and \refchapter{chapter:verification}. We note that 
semantic tableaux, as a proof system for first-order logic, are closely related sequent calculi. Indeed,
a closed tableau for a formula, when read ``upside-down'', corresponds to a cut-free sequent calculus proof for its negation.

\subsection{A Proof of Craig interpolation via Semantic Tableaux}

We will now show how to prove a Craig interpolation theorem
via semantic tableaux. To keep the proof as simple as possible,
we will restrict attention to equality-free \FO sentences over signatures without function symbols: this restriction 
reduces the number of tableau rules, and allows us to avoid some technicalities regarding substitutability of variables by other variables. 
The proof we present, however, can be extended to the general case, 
and all the interpolation theorems for \FO that we discussed in Section~\ref{sec:fo} can be  
proved using the same approach. 
More detailed but accessible presentations of constructive proofs of  interpolation theorems for \FO via semantic tableaux can be found in the textbooks~\cite{fitting1996first} and~\cite{benedikt2016book}. 

\newcommand{\derrule}[2]{\begin{array}{c}#1\\\hline#2\end{array}}
\newcommand{\tabrule}[2]{\begin{array}{l}\bullet~ #1\\|\\\bullet~ #2\end{array}}
\newcommand{\labtabrule}[4]{\begin{array}{l}\parbox{3mm}{$#1$} ~\bullet~ #2\\~~~~|\\ \parbox{3mm}{$#3$} ~\bullet~ #4\end{array}}
\newcommand{\splitrule}[3]{\begin{array}{lll}&\bullet~ #1\\&\!\!\!\!\!/~~~~\backslash\\\bullet~ #2 && \bullet~ #3\end{array}}
\newcommand{\labsplitrule}[6]{\begin{array}{lll}&#1 ~\bullet~ #2\\&/~~~~\backslash\\#3~\bullet~ #4 && #5 ~\bullet~ #6\end{array}}
\newcommand{\clrule}[1]{\begin{array}{l}\bullet~ #1\\|\\\!\times\end{array}}
\newcommand{\labclrule}[2]{\begin{array}{l}\parbox{3mm}{$#1$} ~\bullet~ #2\\~~~~|\\~~~\times\end{array}}

A \emph{semantic tableau calculus} (or, simply, a \emph{tableau calculus}) is defined by a set of ``expansion rules'' and ``closure rules''. Typically, each rule pertains to a different connective.
A tableau is then a tree whose nodes are labeled by sentences, and where each non-leaf node corresponds to the application of an expansion rule, while each leaf node corresponds to the application of a closure rule. We will now present a concrete tableau calculus for equality-free \FO sentences over signatures without function symbols. We will assume that the input sentences
are in negation normal form.
The tableau calculus consists of four expansion rules and two closure rules. 
The expansion rules are: 

\[\tabrule{\phi_1\land\phi_2}{\phi_i ~ \text{for $i\in\{1,2\}$}} 
~~~~~~~~~ 
\tabrule{\exists x.\phi(x)}{\phi(c) ~ \text{for a fresh $c$}} ~ 
~~~~~~~~ 
    \begin{tikzcd}[column sep = 5mm, row sep = 3mm]
    & \bullet \rlap{~ $\phi\lor\psi$} \arrow[dash]{ld}\arrow[dash]{rd}\\    
    \bullet \rlap{~ $\phi$} && \bullet \rlap{~ $\psi$} 
    \end{tikzcd} 
~~~~~~~~~~~~~ 
\tabrule{\forall x.\phi(x)}{\phi(c)} \]
where, in the second and the fourth rule, $\phi(c)$ is obtained
from $\phi$ by replacing all free occurrences of $x$ by the constant symbol $c$.

Each rule should be read as follows: in order to fulfill the upper requirement, it is necessary to fulfill the lower requirements (or, in case of the third rule, one of the lower requirements). For instance, the first rule says that, in order to satisfy a conjunction, we must satisfy  both conjuncts. Similarly, the second rule says that,  to satisfy $\exists x\phi(x)$, we must satisfy $\phi(c)$  for some fresh constant $c$ (i.e., a constant symbol not yet occurring on the tableau branch).
The third rule is a ``branching'' rule. It says that, to satisfy a disjunction, we must satisfy one of the disjuncts. The fourth rule differs subtly from the second one: to satisfy $\forall x\phi(x)$ it is necessary to satisfy $\phi(c)$ for every constant $c$, not only for a fresh constant $c$. This rule, unlike the other three, may sometimes need to be
applied multiple times to the same formula (for different constants $c$). 

The closure rules are the following two rules:
\[\clrule{\bot} ~~~~~~~~~~ \clrule{\alpha,\neg\alpha}\]
Intuitively, these rules say that a branch may be closed if 
 either  $\bot$ occurs on the branch  or a formula and its negation both occur on the branch. The 
 $\times$ symbol is used to mark branches that are closed (i.e., branches ending in a leaf node).

If $\Sigma$ is a finite set of \FO sentences, then 
a \emph{closed tableau for $\Sigma$} is a finite tree whose root is labeled by $\Sigma$, where each non-leaf node corresponds to applying one of the 
expansion rules to a formula that has already been obtained (either in the previous step or at some earlier ancestor of the node in question),
and where each branch ends in the application of a closure rule. Note that, while the root node of the 
tableau is labeled by a finite set of formulas,
every other node is labeled by a single formula.

We can think of a tableau as a way to systematically search for a model that satisfies the input formulas. 
A closed tableau shows that no such model exists. Thus, a
closed tableau constitutes a proof of unsatisfiability. 

\begin{figure}[t]
\[\begin{tikzcd}[column sep = 15mm, row sep = 3mm]
& \bullet \rlap{~ $\begin{array}{l}{\color{black}\exists x ((A(x)\land \neg B(x))\land C(x))~^L}, \\ {\forall y((\neg A(y)\land E(y))\lor B(y))~^R}\end{array}$ \arrow[dash]{d}} \\
& \bullet \rlap{~ {\color{black}$ (A(c)\land \neg B(c))\land C(c) ~^L$}} \arrow[dash]{d}\\    
& \bullet \rlap{~ {\color{black}$ A(c) \land \neg B(c)~^L$}} \arrow[dash]{d}\\    
& \bullet \rlap{~ {\color{black}$ A(c)~^L$}} \arrow[dash]{d}\\    
& \bullet \rlap{~ {\color{black}$ \neg B(c)~^L$}} \arrow[dash]{d}\\    
& \bullet \rlap{~ {$ (\neg A(c)\land E(c))\lor B(c) ~^R$}} \arrow[dash]{ld}\arrow[dash]{rd}\\    
\bullet \rlap{~ {$ \neg A(c) \land E(c)~^R$}} \arrow[dash]{d} && \bullet \rlap{~ {$ B(c)~^R$}} \arrow[dash]{d}\\
\bullet \rlap{~ {$ \neg A(c), E(c)~^R$}} \arrow[dash]{d} && \!\times \\
\!\times
\end{tikzcd}\]
\caption{Example of a closed tableau.}\label{fig:tableau}
\end{figure}
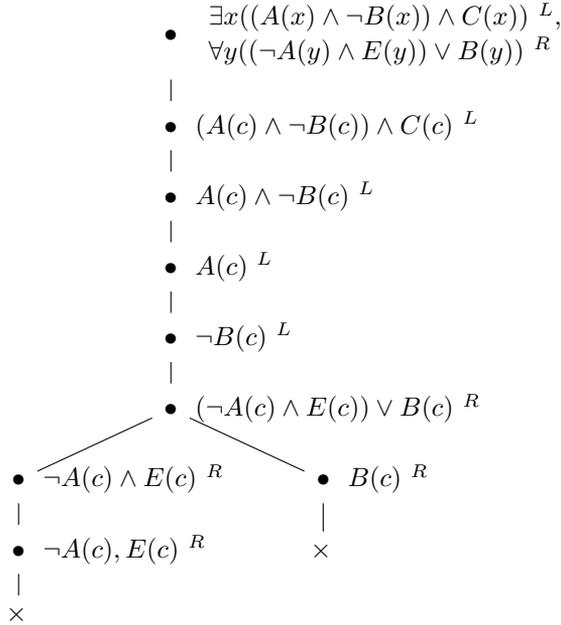

\begin{example} This example is adapted from {\cite[Example 2.2]{benedikt2016book}}.
    Consider the set of \FO sentences \[\Sigma=\{\exists x (A(x)\land \neg B(x)\land C(x)), \forall y((\neg A(y)\land E(y))\lor B(y))\}~.\] We show that $\Sigma$
    is unsatisfiable by giving, in Figure~\ref{fig:tableau}, a closed tableau for $\Sigma$. The annotation using ``L'' and ``R'' in this figure can be ignored for now, and will be explained later.
    Intuitively, the tableau describes a systematic search for a model of $\Sigma$. Starting with the set $\Sigma$, we repeatedly break down
     the requirements that need to be satisfied in order to make 
    $\Sigma$ true. When encountering a disjunction, we explore
    both possibilities. Since each possibility leads to a contradiction (the tableau is closed), we can conclude that it is not possible to satisfy
    $\Sigma$.
\lipicsEnd\end{example}

\begin{theorem}[Soundness and completeness~\cite{beth1955:semantic}]
\label{thm:tableau-completeness}
Let $\Sigma$ be a 
     set of equality-free \FO sentences $\Sigma$, in negation normal form, over a signature without function symbols. Then $\Sigma$ is 
    unsatisfiable if and only if there is a closed tableau for $\Sigma$.
\end{theorem}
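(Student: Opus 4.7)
The plan is to prove the two directions separately. The soundness direction (closed tableau implies unsatisfiability) proceeds by a straightforward local analysis of the tableau rules, while the completeness direction (unsatisfiable implies the existence of a closed tableau) requires a more involved construction based on a fair proof-search that, if it fails to close, yields a model. The restriction to equality-free sentences over function-free signatures simplifies the latter considerably, since we will be able to use the constants appearing on a branch directly as the domain of the model.

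For soundness, I would argue by contradiction: assume $\Sigma$ is satisfied in some structure $A$, and show by induction on tableau depth that every tableau with root labeled $\Sigma$ has at least one branch along which every formula is satisfied by some expansion $A^*$ of $A$ (interpreting the fresh constants introduced by the $\exists$-rule as arbitrary suitable witnesses). The key step is to verify, rule by rule, that if the formula at a parent node holds in the current model, then one of the children (for branching rules) or the unique child (for the other rules) can also be satisfied, possibly after extending $A^*$ with an interpretation for a newly introduced constant in the $\exists$-case. Since neither $\bot$ nor a pair $\alpha,\neg\alpha$ can be satisfied simultaneously, a closed tableau cannot coexist with a satisfying model of $\Sigma$.

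For completeness I would prove the contrapositive: given $\Sigma$ for which no closed tableau exists, construct a model of $\Sigma$. Concretely, I would define a \emph{systematic} (fair) tableau procedure that schedules rule applications so that (i) every non-universal formula on every open branch is eventually expanded, and (ii) every universal formula is eventually instantiated with every constant symbol that ever appears on its branch. If this procedure closes we are done; otherwise, by König's lemma applied to the finitely branching tableau tree, there is an infinite open branch $B$. The set of formulas appearing on $B$ forms a Hintikka-style set: it contains no pair $\alpha,\neg\alpha$ and no $\bot$, it is closed under the conjunctive and disjunctive requirements, every existential formula has a witness constant on $B$, and every universal formula is instantiated at every branch-constant. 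I would then define a term model $M$ whose domain is the (necessarily nonempty) set of constants appearing on $B$, with $R^M(\bar c)$ declared to hold iff the atom $R(\bar c)$ occurs on $B$, and prove by induction on formula structure (in negation normal form) that every formula appearing on $B$ is true in $M$. In particular $M\models\Sigma$, contradicting unsatisfiability.

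The main obstacle will be the completeness direction: designing a fair schedule so that the $\forall$-rule truly keeps pace with the ever-growing supply of constants introduced by the $\exists$-rule, and then carrying out the inductive verification that the term model satisfies the branch. The delicate point in this induction is the universal case, which relies precisely on clause (ii) of fairness, while the atomic base case uses the absence of contradictory pairs on $B$. Everything else, including soundness and the routine book-keeping for the branching rule, is a short check once this structure is in place.
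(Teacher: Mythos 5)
The paper does not prove this theorem at all: it explicitly defers to the textbooks of Fitting and of Benedikt et al., so there is no in-paper argument to compare against. Your proposal is the standard argument found in those references (soundness by a rule-by-rule satisfaction-preservation induction; completeness via a fair systematic tableau whose open branch yields a Hintikka set and hence a term model over the branch constants), and it is essentially correct, with the function-free, equality-free restriction indeed making the term-model step routine. Two small points to tighten: first, an unclosed systematic tableau need not be infinite, so the open saturated branch may be finite and K\"onig's lemma is only needed in the infinite case; second, the domain of your term model is ``necessarily nonempty'' only if your fair schedule guarantees that some constant appears on every branch (e.g.\ by instantiating each universal formula with a default fresh constant when the branch contains none, as in the case $\Sigma=\{\forall x\,P(x)\}$), and fairness should also ensure that the conjunction rule, which in this calculus adds a single conjunct per application, is eventually applied for both conjuncts. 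Neither point is a genuine gap; both are standard bookkeeping in the classical proof.
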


We refer to~\cite{fitting1996first,benedikt2016book} for a proof of Theorem~\ref{thm:tableau-completeness}. In what follows,
we will show how to use the above theorem to
prove a Craig interpolation theorem. More precisely, we will prove
the following theorem. It is easy to see that this is an
equivalent rephrasing of the Craig interpolation theorem for \FO, 
restricted to equality-free sentences without function symbols.

\begin{theorem}
For all function-free, equality-free \FO sentences $\phi,\psi$, if 
$\{\phi,\psi\}$ is unsatisfiable then there is an function-free, equality-free
\FO sentence $\theta$ such that $\models\phi\to\theta$
and $\models\psi\to\neg\theta$ and $\sig(\theta)\subseteq\sig(\phi)\cap\sig(\psi)$.
\end{theorem}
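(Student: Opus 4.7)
The plan is to construct the interpolant by structural induction on a closed tableau, following the classical Smullyan--Fitting approach. By Theorem~\ref{thm:tableau-completeness}, the assumption that $\{\phi,\psi\}$ is unsatisfiable yields a closed tableau $T$ whose root is labeled by $\{\phi,\psi\}$. First I would annotate every formula occurrence in $T$ with a label $L$ or $R$: $\phi$ is labeled $L$, $\psi$ is labeled $R$, and each expansion rule propagates the parent label to its child or children. Constant symbols introduced by applications of the $\exists$-rule are classified analogously as L-constants or R-constants according to the label of the formula expanded, while all other (original) constants inherit the standard classification into $\sig(\phi)$ and $\sig(\psi)$.

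The key lemma, proved by induction on the height of any closed subtableau $T'$, states that one can effectively assign to the root of $T'$ an interpolant $\theta$ over $\sig(\phi) \cap \sig(\psi)$ (mentioning only shared constants) such that the conjunction of L-labeled formulas appearing on the path to the root of $T'$ entails $\theta$, while the conjunction of R-labeled formulas on that path entails $\neg\theta$. The global interpolant demanded by the theorem is then the one produced at the root of $T$.

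The base cases are straightforward: a branch closed by $\alpha^L,\neg\alpha^L$ or by $\bot^L$ receives interpolant $\bot$; one closed by $\alpha^R,\neg\alpha^R$ or $\bot^R$ receives $\top$; a mixed closure by $\alpha^L,\neg\alpha^R$ takes $\alpha$ itself (and $\neg\alpha$ in the dual case), with $\alpha$ atomic and its occurrence on both sides forcing its relation symbol to be shared. The propositional expansion steps combine interpolants in the expected way: a conjunctive rule on either side preserves the child interpolant, a disjunctive rule on the L-side produces the disjunction of its two child interpolants, and on the R-side their conjunction.

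The main obstacle, as in every tableau-based interpolation argument, is handling the quantifier rules correctly, because of both the fresh parameters introduced by $\exists$-expansions and the fact that the $\forall$-rule may instantiate with a constant originating from the opposite side. When an L-labeled $\exists$-expansion introduces a fresh constant $c$ and the subtableau below yields an interpolant $\theta(c)$, the interpolant one step above must be $\exists x\, \theta(x)$, so that the unshared L-constant $c$ is eliminated; symmetrically, an R-labeled $\exists$-expansion calls for a universal quantification. The subtler point is the $\forall$-rule, which may instantiate with \emph{any} constant already on the branch, including one introduced on the opposite side (a \emph{cross-instantiation}): an L-labeled $\forall$-application over an R-constant whose trace survives into the child interpolant forces a universal quantification when passing the step (and dually for R) in order to preserve the signature and constant restrictions. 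A careful case analysis across all rule instances then verifies that the stated invariants propagate upward, yielding the desired $\theta$ at the root of $T$.
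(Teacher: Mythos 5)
Your proposal is correct and follows essentially the same route as the paper's proof: a closed tableau with an L/R labeling propagated through the expansion rules, an interpolant assigned to every node by induction over the subtableau, identical base and propositional cases, and the same treatment of the crucial $\forall$-rule cross-instantiation (universal quantification when the L-side instantiates with an R-constant, existential in the dual case). The only cosmetic difference is at the $\exists$-rule: since the fresh constant $c$ occurs on one side only, it cannot appear in the child interpolant under your own invariant, so the quantification you add there is vacuous, and the paper simply passes the child interpolant up unchanged.
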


\begin{proof}
Assume that $\{\phi,\psi\}$ is unsatisfiable. 
Then by  Theorem~\ref{thm:tableau-completeness}, there is a closed tableau for $\{\phi,\psi\}$.

First, we annotate each formula occurring on the tableau with L or R depending on whether
it was ``derived'' from $\phi$ or from $\psi$. Formally, we
label $\phi$ and $\psi$ as $L$ and $R$ respectively, and 
every formula which is derived on the basis of an expansion rule
gets the same label as the formula to which the expansion rule was applied. In this way, every formula occurring in the tableau is labeled with L or R. See Figure~\ref{fig:tableau} for an example.

Next, for every node $n$ of the tableau, let $\chi_L(n)$ be the conjunction of all L-labeled
sentences occurring at  node $n$ or at an ancestor of $n$, and let
$\chi_R(n)$ be the conjunction of all R-labeled
sentences occurring at  node $n$ or at an ancestor of $n$. 

\medskip\par\noindent\textbf{Claim:} For every node $n$ there is an equality-free FO sentence $\theta$ (which we will call the \emph{interpolant} for $n$) such that $\models\chi_L(n)\to\theta$ and
 $\models\chi_R(n)\to\neg\theta$ and  $\sig(\theta)\subseteq\sig(\chi_L(n))\cap\sig(\chi_R(n))$.

\medskip\par\noindent
This claim implies the theorem, as the interpolant of the root node is an interpolant for $\{\phi,\psi\}$ in the sense of the statement of the theorem.
We prove the claim by induction on the size of the subtree below $n$.
The base case is where  $n$ is a leaf, in which case one of the closure rules applies:
\[\labclrule{n}{\bot} ~~~~~~~~~~ \labclrule{n}{\alpha,\neg\alpha}\]
We can construct an interpolant for $n$ as follows:
\begin{itemize}
    \item If $\bot$ occurs on the branch as an L-labeled sentence, then $\bot$ is an interpolant.
    \item If $\bot$ occurs on the branch as an R-labeled sentence, then $\top$ is an interpolant.
\item If $\alpha$ occurs on the branch as an L-labeled sentence and $\neg\alpha$ occurs on the branch as an R-labeled sentence, then
    $\alpha$ is an interpolant.
    \item If $\alpha$ occurs on the branch as an R-labeled sentence and $\neg\alpha$ occurs on the branch as an L-labeled sentence, then
    $\neg\alpha$ is an interpolant.
    \item If $\alpha$ occurs on the branch as an L-labeled sentence and $\neg\alpha$ occurs on the branch as an L-labeled sentence, then
    $\bot$ is an interpolant.
    \item If $\alpha$ occurs on the branch as an R-labeled sentence and $\neg\alpha$ occurs on the branch as an R-labeled sentence, then
    $\top$ is an interpolant.
\end{itemize}

For non-leaf nodes $n$, we assume that an interpolant
has already been computed for each of the children nodes of $n$,
and we reason by cases depending on which of the four
expansion rules was applied at node $n$.

\[\labtabrule{n}{\alpha_1\land\alpha_2}{n'}{\alpha_i \text{~~[interpolant $\theta$]}}\]
Observe that, in this case, $\chi_L(n)$ and $\chi_L(n')$ are equivalent and have
the same signature, and the same holds for $\chi_R(n)$ and $\chi_R(n')$. It follows that $\theta$ is also 
an interpolant for $n$.

\[\labtabrule{n}{\exists x.\alpha(x)}{n'}{\alpha(c) ~ \text{for a fresh $c$}  \text{~~[interpolant $\theta$]}}\]
Again, in this case, it can be shown that $\theta$  is also an interpolant for $n$. Note that, since $c$ is fresh, it 
occurs only once, either in an L-labeled sentence or in an R-labeled sentence, and hence does not occur in the interpolant $\theta$.

\[    \begin{tikzcd}[column sep = 20mm, row sep = 10mm]
    & \llap{$n$ ~} \bullet \rlap{~ $\alpha\lor\beta$} \arrow[dash]{ld}\arrow[dash]{rd}\\    
    \llap{$n'_1$ ~} \bullet \rlap{~ $\alpha$ ~~[interpolant $\theta_1$]} && \llap{$n'_2$ ~} \bullet \rlap{~ $\beta$ ~~[interpolant $\theta_2$]} 
    \end{tikzcd} ~~~~~~~~~~~~~~~~~
\]
We now have two cases, depending on whether $\alpha\lor\beta$ occurs on the branch as an $L$-labeled sentence
or an $R$-labeled sentence. In the former case, $\theta_1\lor\theta_2$ is an interpolant for $n$. Otherwise, $\theta_1\land\theta_2$ is an interpolant for $n$.

\[\labtabrule{n}{\forall x.\alpha(x)}{n'}{\alpha(c)  \text{~~[interpolant $\theta$]}} \]
We have four cases, depending on whether $\forall x.\alpha(x)$ occurs on the branch as an $L$-labeled sentence
or an $R$-labeled sentence, and also on whether $c$ occurs on the branch in an $L$-labeled sentence or an $R$-labeled sentence.
\begin{itemize}
    \item If $\forall x.\alpha(x)$ occurs on the branch as an $L$-labeled sentence and $c$ occurs on the branch in an $L$-labeled sentence, then $\theta$ is an interpolant for $n$. 
    \item If $\forall x.\alpha(x)$ occurs on the branch as an $R$-labeled sentence and $c$ occurs on the branch in an $R$-labeled sentence, then $\theta$ is an interpolant for $n$. 
    \item If $\forall x.\alpha(x)$ occurs on the branch as an $L$-labeled sentence and $c$ does \emph{not} occur on the branch in an $L$-labeled sentence, then $\forall x.\theta[c/x]$ is an interpolant for $n$. 
    \item If $\forall x.\alpha(x)$ occurs on the branch as an $R$-labeled sentence and $c$ does \emph{not} occur on the branch in an $R$-labeled sentence, then $\exists x.\theta[c/x]$ is an interpolant for $n$.  
\end{itemize}
\end{proof}

\begin{figure}
\textbf{Expansion rules:}
\[
\begin{tikzcd}[column sep = 20mm, row sep = 10mm]
    \bullet \rlap{~ $\alpha_1\land\alpha_2$ ~~ {\color{darkgray}[interpolant $\theta$]}} 
    \arrow[dash]{d}
    \arrow[dashed,gray,transform canvas={xshift=15ex},<-]{d}
    \\    
    \bullet \rlap{~ $\alpha_i$ ~~~~~~~~~~{\color{darkgray}[interpolant $\theta$]}} 
\end{tikzcd}~~~~~~~~~~~~~~~~~~~~~~~~~~~~~~~~~~~~~~~~~~~~~~~~~~~~
\begin{tikzcd}[column sep = 20mm, row sep = 10mm]
    \bullet \rlap{~ $\exists x \alpha(x)$ ~~~~~~~~~~~~ {\color{darkgray}[interpolant $\theta$]}} 
    \arrow[dash]{d}
    \arrow[dashed,gray,transform canvas={xshift=25ex},<-]{d}
    \\    
    \bullet \rlap{~ $\alpha(c)$ for fresh $c$ ~~{\color{darkgray}[interpolant $\theta$]}} 
\end{tikzcd}~~~~~~~~~~~~~`
\]
\[    
\begin{tikzcd}[column sep = 20mm, row sep = 10mm]
    & \bullet \rlap{~ $\alpha\lor\beta$ ~~ {\color{darkgray}[interpolant $\theta$]}} 
    \arrow[dash]{ld}
    \arrow[dashed,gray,transform canvas={xshift=16ex},<-]{ld}
    \arrow[dash,crossing over]{rd}
    \arrow[dashed,gray,transform canvas={xshift=16ex},<-]{rd}
    \\    
    \bullet \rlap{~ $\alpha$ ~~{\color{darkgray}[interpolant $\theta_1$]}} && 
    \bullet \rlap{~ $\beta$ ~~{\color{darkgray}[interpolant $\theta_2$]}} 
\end{tikzcd}~~~~~~~~~~~~~`
\]
\begin{center}
where {\color{darkgray}$\theta = \begin{cases} 
\theta_1\lor\theta_2 & \text{if $\alpha\lor\beta$ is $L$-labeled}\\
\theta_1\land\theta_2 & \text{if $\alpha\lor\beta$ is  $R$-labeled}
\end{cases}$}
\end{center}
\[
\begin{tikzcd}[column sep = 20mm, row sep = 10mm]
    \bullet \rlap{~ $\forall x \alpha(x)$ ~~ {\color{darkgray}[interpolant $\theta'$]}} 
    \arrow[dash]{d}
    \arrow[dashed,gray,transform canvas={xshift=15ex},<-]{d}
    \\    
    \bullet \rlap{~ $\alpha(c)$ ~~~~~~~{\color{darkgray}[interpolant $\theta$]}} 
\end{tikzcd}~~~~~~~~~~~~~~~~~~~~~~~~~~~~~~~~~~~~~~~~
\text{where {\color{darkgray}$\theta' = \begin{cases} 
\exists x\theta[c/x] & \text{if $\forall x\alpha(x)$ is $L$-labeled and $c$ only} \\& \text{occurs in $L$-labeled formulas}\\
\forall x\theta[c/x] & \text{if $\forall x\alpha(x)$ is $R$-labeled and $c$ only} \\& \text{occurs in $R$-labeled formulas }\\
\theta & \text{otherwise}\\
\end{cases}$}}
\]

\bigskip

\textbf{Closure rules:}
\[
\begin{tikzcd}[column sep = 20mm, row sep = 10mm]
    \bullet \rlap{~ $\alpha, \neg\alpha$ ~~ {\color{darkgray}interpolant $\theta$}}
    \arrow[dash]{d}
    \\    
    \times
\end{tikzcd}
~~~~~~~~~~~~~~~~~~~~~~~~~~~~~~~~~~~~
\text{where {\color{darkgray}$\theta = \begin{cases} 
\top & \text{if $\alpha$ and $\neg\alpha$ are L-labeled}\\
\bot & \text{if $\alpha$ and $\neg\alpha$ are R-labeled}\\
\alpha & \text{if $\alpha$ is $L$-labeled and $\neg\alpha$ is R-labeled}\\
\neg\alpha & \text{if $\alpha$ is $R$-labeled and $\neg\alpha$ is L-labeled}
\end{cases}$}}
\]
\bigskip
\[
\begin{tikzcd}[column sep = 20mm, row sep = 10mm]
    \bullet \rlap{~ $\bot$ ~~ {\color{darkgray}interpolant $\theta$}}
    \arrow[dash]{d}
    \\    
    \times
\end{tikzcd}
~~~~~~~~~~~~~~~~~~~~~~~~~~~~~~~~~~~~
\text{where {\color{darkgray}$\theta = \begin{cases} 
\bot & \text{if $\bot$ is L-labeled}\\
\top & \text{if $\bot$ is R-labeled}\end{cases}$}}
~~~~~~~~~~~~~~~~~~~~~~~~~~~~~
\]

\caption{Tableau rules with corresponding interpolant propagation rules}
\label{fig:propagation-rules}
\end{figure}

The interpolant constructions in the above proof are summarized in Figure~\ref{fig:propagation-rules}.

\subsection{Computational Complexity and Size Bounds}
\label{sec:algorithmic}

As described above, proof-theoretic proofs of Craig interpolation construct an interpolant for $\phi\to\psi$ from a proof of the validity of $\phi\to\psi$ (in a suitable proof system such as semantic tableaux). It follows from the undecidability of \FO that  there is no recursive function for constructing a proof of a valid \FO implication, so such proof-theoretic approaches are \emph{constructive} only in a weak sense: the interpolant can be effectively constructed once we have laid our hands on a suitable proof of $\phi\to\psi$. This is the best one
can hope for:

\begin{theorem}[\cite{Friedman1976}]
Let $|\phi|$ denote the length of an \FO formula $\phi$.
There is no recursive function $f:\mathbb{N}\to\mathbb{N}$ such that
every valid \FO implication 
$\models\phi\to\psi$ has a Craig interpolant $\vartheta$
with $|\vartheta|\leq f(|\phi|+|\psi|)$.
\end{theorem}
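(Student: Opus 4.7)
My plan is a proof by contradiction, leveraging non-recursive growth of a computational invariant of \FO. Suppose such a recursive $f$ exists. The key consequence of this assumption is that Craig interpolants become effectively computable: given $\phi,\psi$ with $\models\phi\to\psi$, enumerate the finite list of candidate formulas $\vartheta$ with $\sig(\vartheta)\subseteq\sig(\phi)\cap\sig(\psi)$, $\fvar(\vartheta)\subseteq\fvar(\phi)\cap\fvar(\psi)$, and $|\vartheta|\leq f(|\phi|+|\psi|)$. For each candidate, dovetail proof searches for $\models\phi\to\vartheta$ and $\models\vartheta\to\psi$ in a complete proof system for \FO. By the assumed bound and Craig's theorem, at least one candidate is an actual interpolant, so this dovetailed search must terminate and produce a witness together with proofs of the two sub-implications.

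The next step is to transfer this effectiveness to contradict a known non-recursive phenomenon. Following the standard Friedman-style strategy, I would construct, by diagonalization against $f$, a family $(\phi_n,\psi_n)_{n\in\mathbb{N}}$ of valid \FO implications with two properties: (i) each implication is genuinely valid, hence admits interpolants, and (ii) the minimum size of any Craig interpolant grows with $n$ faster than any recursive function of $|\phi_n|+|\psi_n|$. The classical way to arrange (ii) is to encode Turing-machine computations into \FO so that any interpolant must, in effect, record a halting run of a machine whose halting time exceeds $f$ at the relevant input length. The shared signature is chosen carefully so that no ``shortcut'' interpolant is possible: the interpolant must use only symbols present in both $\phi_n$ and $\psi_n$, and these symbols are rich enough to describe computations but poor enough that only an explicit transcript suffices. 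A diagonal choice of machine—picked effectively from the recursive index of $f$—then guarantees that the minimum interpolant size outpaces $f$ at some $n$, contradicting the hypothesized bound.

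The main obstacle is step two: the concrete construction of the diagonalizing family. The delicate points are simultaneously (a) ensuring that each $\phi_n\to\psi_n$ is valid without trivializing the interpolant (so that validity is not already witnessed by a propositional tautology), (b) arranging that the common signature genuinely forces the interpolant to carry computational data rather than letting it hide inside symbols of $\phi_n$ or $\psi_n$, and (c) making the encoding uniform in $n$ so that a single diagonal construction defeats any prescribed recursive $f$. Once this family is in hand, the contradiction is immediate: applying the effective interpolation procedure of the first step to $\phi_n\to\psi_n$ produces interpolants whose sizes are bounded by $f(|\phi_n|+|\psi_n|)$, contradicting the non-recursive growth established in the construction. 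The remaining ingredients—completeness of \FO, semi-decidability of validity, and the finiteness of candidate interpolants of bounded size—are routine.
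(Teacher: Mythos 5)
The statement you are proving is only cited in the paper (to Friedman), so your attempt must stand on its own, and as written it has a genuine gap: all of the mathematical content is deferred to your second step, which you yourself flag as ``the main obstacle'' and never carry out. Exhibiting, for a given recursive $f$, valid implications $\models\phi_n\to\psi_n$ whose every Craig interpolant exceeds $f(|\phi_n|+|\psi_n|)$ in size \emph{is} the theorem; saying that one would ``encode Turing-machine computations so that any interpolant must record a halting run'' names the target but supplies no argument. In particular, the claim that every interpolant must carry a computation transcript is itself a lower-bound statement: you would have to show that no short formula over the shared signature separates the models of $\phi_n$ from those of $\neg\psi_n$, which requires a concrete model-theoretic argument (pairs of structures that short shared-signature sentences cannot distinguish), not merely a ``careful choice'' of the common vocabulary. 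Your points (a)--(c) are precisely where the proof lives, and they are left unresolved.

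There is also a structural oddity that signals the plan is not yet a proof: your first step (from the bound $f$, interpolants become effectively computable by bounded enumeration plus dovetailed proof search) is correct, but it is never used. The contradiction you announce at the end is simply ``minimal interpolant size outgrows $f$'' against ``$f$ bounds interpolant size,'' which needs no effectiveness at all; so either step one is superfluous or the intended argument is different from the one described. The standard way to make the effectiveness pay off is computability-theoretic: use the effective interpolant search on implications built from a recursively inseparable pair of r.e.\ sets (or from the halting problem), arranging the shared signature to be weak enough (e.g.\ equality only, or a fixed decidable vocabulary) that the computed interpolants yield a recursive separation of the two sets, contradicting their inseparability. That is where the hypothesis on $f$ genuinely enters, and it is the ingredient your proposal would need to spell out for the argument to close.
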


In particular, this implies that there is no recursive function taking as input two \FO formulas $\phi,\psi$ and returning a Craig 
interpolant whenever $\models\phi\to\psi$.

As mentioned in Section \ref{sec:frag}, for decidable fragments of \FO, the situation improves regarding the size and the effective computability of interpolants. More precisely, for present purposes, let a \emph{decidable fragment} $F$ of \FO be a fragment that is decidable for entailment (i.e., it is decidable whether $\models\phi\to\psi$ for $\phi,\psi\in F$) and such that it is decidable whether a given \FO formula belongs to $F$.

\begin{proposition}
    Let $F$ be any decidable fragment of \FO.
    \begin{itemize}
        \item
    There is a recursive function that takes as input
    $\phi,\psi\in F$ and produces a Craig interpolant in \FO whenever $\models\phi\to\psi$.
   
    \item If $F$ has CIP, then there is a recursive function that takes as input
    $\phi,\psi\in F$ and produces a Craig interpolant  $\vartheta\in F$ whenever $\models\phi\to\psi$.
    \end{itemize}
\end{proposition}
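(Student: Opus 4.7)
The plan is to combine two ingredients already in hand: the decidability hypotheses on $F$ (both for membership and for entailment), and the constructive nature of the tableau-based proof of Craig interpolation given in the previous subsection, which effectively extracts an $\FO$-interpolant from any closed tableau witnessing $\models\phi\to\psi$.

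For the first item, given $\phi,\psi\in F$ I would first run the decision procedure for entailment in $F$ on the pair $(\phi,\psi)$. If the answer is negative, halt and return a dummy value. Otherwise, systematically enumerate tableaux for $\{\phi,\neg\psi\}$ under a fair expansion strategy; by completeness of the tableau calculus (Theorem~\ref{thm:tableau-completeness}), some such tableau must close, so the enumeration terminates. Then apply the interpolant-propagation rules of Figure~\ref{fig:propagation-rules} bottom-up along this closed tableau to produce an $\FO$-formula $\vartheta$, which by construction is a Craig interpolant for $\phi\to\psi$. Handling equality and function symbols requires the standard extensions of the tableau calculus noted in the previous subsection, so this poses no real obstacle.

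For the second item, we additionally have CIP for $F$, which guarantees that an interpolant lying in $F$ exists. The idea is then to search directly. Given $\phi,\psi\in F$, first use the decision procedure for entailment in $F$ to verify $\models\phi\to\psi$; if this fails, halt. Otherwise, enumerate all $\FO$-formulas and filter to those $\vartheta$ satisfying (i)~$\vartheta\in F$ (checkable by decidability of membership in $F$), (ii)~$\sig(\vartheta)\subseteq \sig(\phi)\cap \sig(\psi)$, and (iii)~$\fvar(\vartheta)\subseteq \fvar(\phi)\cap \fvar(\psi)$. For each surviving candidate, test both $\models\phi\to\vartheta$ and $\models\vartheta\to\psi$ using the entailment-decision procedure of $F$ (applicable because $\phi,\vartheta,\psi$ all lie in $F$). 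By CIP, some candidate satisfies both tests, so the search terminates and returns an interpolant in $F$.

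The main thing to watch is that every step is genuinely effective: the decidability of entailment in $F$ takes care of all validity checks, the decidability of membership in $F$ makes the enumeration over $F$ effective, and the tableau construction from the previous subsection supplies the extraction needed for item~1. No step reduces to an undecidable problem, so the two constructions define total recursive functions of the required kind. The only mild subtlety is that, in item~1, we rely on the constructive proof of Craig interpolation being available for full $\FO$ (including equality and function symbols), which we take for granted on the strength of the remark in the previous subsection that the tableau argument extends to this general setting.
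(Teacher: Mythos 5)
Your argument is correct, but it diverges from the paper's proof in an interesting way, chiefly in the first item. The paper handles both items uniformly by pure certificate search: after checking validity of $\models\phi\to\psi$ with the decision procedure for $F$, it enumerates triples $(\vartheta,p_1,p_2)$ where $\vartheta$ ranges over \FO-formulas (respectively $F$-formulas) satisfying the signature and free-variable constraints and $p_1,p_2$ are proofs of $\phi\to\vartheta$ and $\vartheta\to\psi$ in some sound and complete proof system for \FO; termination is guaranteed because Craig interpolation (respectively CIP for $F$) promises that a witness exists. Your first item instead extracts the interpolant constructively, by searching for a closed tableau for $\{\phi,\neg\psi\}$ and running the propagation rules of Figure~\ref{fig:propagation-rules}; this is sound and terminates by Theorem~\ref{thm:tableau-completeness}, and it yields a more informative algorithm than blind enumeration, but it leans on the (only sketched) extension of the tableau interpolation machinery to equality and function symbols, and on the minor technicalities of normalizing $\neg\psi$ to negation normal form and treating free variables as constants -- none of which the paper's certificate-search argument needs. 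Your second item is essentially the paper's argument with one simplification: since $\phi,\vartheta,\psi$ all lie in $F$, you can verify the two implications with the entailment decision procedure for $F$ rather than by enumerating proofs, which is a perfectly valid (and slightly cleaner) use of the decidability hypothesis. Both routes establish the proposition; the paper's buys uniformity and minimal machinery, yours buys a genuinely effective interpolant construction in item~1 at the cost of invoking the general-case tableau calculus.
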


\begin{proof}
    We first test the validity of the entailment $\models\phi\to\psi$. If it is not valid, we terminate, otherwise we search for an interpolant by brute force enumeration, enumerating all triples of strings $(\vartheta,p_1,p_2)$ where $\vartheta$ is a formula (in \FO respectively in $F$) 
    with $\sig(\vartheta)\subseteq\sig(\phi)\cap\sig(\psi)$ and $\fvar(\vartheta)\subseteq\fvar(\phi)\cap\fvar(\psi)$ and where 
    $p_1,p_2$ are valid proofs
     (in some sound and complete proof system for \FO) for $\phi\to\vartheta$ and $\vartheta\to\psi$. This is guaranteed to terminate, since we know that a Craig interpolant exists.
\end{proof}

The above proposition does not provide any concrete bounds on the size or complexity of computing interpolants. For various fragments, however, concrete bounds have been obtained. See for instance~\refchapter{chapter:modal} for the case of modal logic and~\refchapter{chapter:verification} for the quantifier-free fragment of \FO. Similarly for \GNFO, tight bounds are known for the size of Craig interpolants as well as an optimal algorithm for computing them~\cite{Benedikt2015:effective}, using the method of type elimination sequences (see~\refchapter{chapter:modal}).

We conclude with an open question. Recall that \FO formulas do not in general have uniform interpolants. 

\begin{question}
    Consider the following problem: \emph{given an \FO-formula $\phi$
    and a signature $\tau\subseteq\sig(\phi)$, does $\phi$ have
    a uniform interpolant w.r.t.~$\tau$?} This problem is easily seen to be undecidable and to be in $\Sigma^0_3$ (the third level of the arithmetical hierarchy).
    Is it $\Sigma^0_3$-complete?
\end{question}

\bibliography{admin/bib.bib}

\end{document}